\newtheorem{prob}{\textsc{Problem}}
\newtheorem{subprob}{\textsc{Problem}}[prob]
\newtheorem{lem}{Lemma}
\newcommand{\fd}{\textsc{FrequentDirections}\xspace}
\newcommand{\swfd}{\textsc{FrequentDirections} over sliding window\xspace}
\newcommand{\lmfd}{\textsf{LM-FD}\xspace}
\newcommand{\difd}{\textsf{DI-FD}\xspace}
\newcommand{\dsfd}{\textsf{DS-FD}\xspace}
\newcommand{\fastdsfd}{\textsf{Fast-DS-FD}\xspace}
\newcommand{\seqdsfd}{\textsf{Seq-DS-FD}\xspace}
\def\header{\noindent}
\newcommand\vldbdoi{10.14778/3665844.3665847}
\newcommand\vldbpages{2149 - 2161}
\newcommand\vldbvolume{17}
\newcommand\vldbissue{9}
\newcommand\vldbyear{2024}
\newcommand\vldbauthors{\authors}
\newcommand\vldbtitle{\shorttitle} 
\newcommand\vldbavailabilityurl{https://github.com/yinhanyan/DS-FD}
\newcommand\vldbpagestyle{empty} 
\begin{document}
\title{Optimal Matrix Sketching over Sliding Windows}


\author{Hanyan Yin}
\affiliation{%
  \institution{Renmin University of China}
}
\email{yinhanyan@ruc.edu.cn}

\author{Dongxie Wen}
\affiliation{%
  \institution{Renmin University of China}
}
\email{2019202221@ruc.edu.cn}

\author{Jiajun Li}
\affiliation{%
  \institution{Renmin University of China}
}
\email{2015201613@ruc.edu.cn}

\author{Zhewei Wei}
\authornote{Zhewei Wei is the corresponding author. The work was partially done at Gaoling School of Artificial Intelligence, Beijing Key Laboratory of Big Data Management and Analysis Methods, MOE Key Lab of Data Engineering and Knowledge Engineering, and Pazhou Laboratory (Huangpu), Guangzhou, Guangdong 510555, China. }
\affiliation{%
  \institution{Renmin University of China}
}
\email{zhewei@ruc.edu.cn}

\author{Xiao Zhang}
\affiliation{%
  \institution{Renmin University of China}
}
\email{zhangx89@ruc.edu.cn}

\author{Zengfeng Huang}
\affiliation{%
  \institution{Fudan University}
}
\email{huangzf@fudan.edu.cn}

\author{Feifei Li}
\affiliation{%
  \institution{Alibaba Group}
}
\email{lifeifei@alibaba-inc.com}







\begin{abstract}

Matrix sketching, aimed at approximating a matrix $\bm{A} \in \mathbb{R}^{N\times d}$ consisting of vector streams of length $N$ with a smaller sketching matrix $\bm{B} \in \mathbb{R}^{\ell\times d}, \ell \ll N$, has garnered increasing attention in fields such as large-scale data analytics and machine learning. A well-known deterministic matrix sketching method is the \fd algorithm, which achieves the optimal $O\left(\frac{d}{\varepsilon}\right)$ space bound and provides a covariance error guarantee of $\varepsilon = \lVert \bm{A}^\top \bm{A} - \bm{B}^\top \bm{B} \rVert_2/\lVert \bm{A} \rVert_F^2$. The matrix sketching problem becomes particularly interesting in the context of sliding windows, where the goal is to approximate the matrix $\bm{A}_W$, formed by input vectors over the most recent $N$ time units. However, despite recent efforts, whether achieving the optimal $O\left(\frac{d}{\varepsilon}\right)$ space bound on sliding windows is possible has remained an open question.

In this paper, we introduce the \dsfd algorithm, which achieves the optimal $O\left(\frac{d}{\varepsilon}\right)$ space bound for matrix sketching over row-normalized, sequence-based sliding windows. We also present matching upper and lower space bounds for time-based and unnormalized sliding windows, demonstrating the generality and optimality of \dsfd across various sliding window models. This conclusively answers the open question regarding the optimal space bound for matrix sketching over sliding windows. We conduct extensive experiments with both synthetic and real-world datasets, validating our theoretical claims and thus confirming the correctness and effectiveness of our algorithm, both theoretically and empirically.

\end{abstract}

\maketitle

\pagestyle{\vldbpagestyle}
\begingroup\small\noindent\raggedright\textbf{PVLDB Reference Format:}\\
\vldbauthors. \vldbtitle. PVLDB, \vldbvolume(\vldbissue): \vldbpages, \vldbyear.\\
\href{https://doi.org/\vldbdoi}{doi:\vldbdoi}
\endgroup
\begingroup
\renewcommand\thefootnote{}\footnote{\noindent
This work is licensed under the Creative Commons BY-NC-ND 4.0 International License. Visit \url{https://creativecommons.org/licenses/by-nc-nd/4.0/} to view a copy of this license. For any use beyond those covered by this license, obtain permission by emailing \href{mailto:info@vldb.org}{info@vldb.org}. Copyright is held by the owner/author(s). Publication rights licensed to the VLDB Endowment. \\
\raggedright Proceedings of the VLDB Endowment, Vol. \vldbvolume, No. \vldbissue\ %
ISSN 2150-8097. \\
\href{https://doi.org/\vldbdoi}{doi:\vldbdoi} \\
}\addtocounter{footnote}{-1}\endgroup

\ifdefempty{\vldbavailabilityurl}{}{
\vspace{.3cm}
\begingroup\small\noindent\raggedright\textbf{PVLDB Artifact Availability:}\\
The source code, data, and/or other artifacts have been made available at \url{https://github.com/yinhanyan/DS-FD}.
\endgroup
}

\section{Introduction}

\begin{table*}[ht]
\caption{Given the dimension \(d\) of each row vector, the upper bound of relative covariance error \(\varepsilon\), and the size of the sliding window \(N\), this table presents an overview of space complexities for algorithms addressing matrix sketching over sliding windows. An asterisk (*) indicates that the space complexity is the expected value when it is a random variable. For each column, \textit{sequence-based} denotes that each update occupies a timestamp, \textit{time-based} denotes that each timestamp unit may contain zero or multiple updates, \textit{normalized} denotes the norm of each row equals a constant, and \textit{unnormalized} denotes the norm of each row \(\lVert \bm{a}_i\rVert_2^2 \in [1, R]\).
}
\label{tab:alg}
\begin{tabular}{l|l|l|l|l}
\toprule
\multirow{2}{*}{sketch $\kappa$} & \multicolumn{2}{c|}{Sequence-based} & \multicolumn{2}{c}{Time-based} \\ \cline{2-5}
& normalized & unnormalized & normalized & unnormalized \\ \hline
Sampling(SWR)~\cite{braverman2020near,wei2016matrix} & $O\left({d\over \varepsilon^2} \log N\right)$ * & $O\left({d\over \varepsilon^2}\log NR\right)$ * & $O\left({d\over \varepsilon^2}\log N\right)$ * & $O\left({d\over \varepsilon^2}\log NR\right)$ *\\
LM-HASH~\cite{clarkson2017low} & $O\left({d^2\over\varepsilon^3}\right)$ & $O\left({d^2\over \varepsilon^3}\log R\right)$ & $O\left({d^2\over \varepsilon^3}\log \varepsilon N\right)$ & $O\left({d^2\over \varepsilon^3}\log \varepsilon NR\right)$ \\
DI-RP~\cite{boutsidis2014near} & $O\left({d\over \varepsilon^2}\log {1\over \varepsilon} \right)$ & $O\left({Rd \over \varepsilon^2}\log {R\over \varepsilon}\right)$ & - & - \\
DI-HASH~\cite{clarkson2017low} & $O\left({d^2\over \varepsilon^2}\log {1\over \varepsilon}\right)$ & $O\left({Rd^2\over \varepsilon^2}\log {R\over \varepsilon}\right)$ & - & - \\
\lmfd~\cite{liberty2013simple,wei2016matrix} & $O\left({d\over \varepsilon^2}\right)$ & $O\left({d \over \varepsilon^2}\log R\right)$ & $O\left({d \over \varepsilon^2}\log \varepsilon N\right)$ & $O\left({d \over \varepsilon^2}\log \varepsilon NR\right)$  \\
\difd~\cite{liberty2013simple,wei2016matrix} & $O\left({d \over \varepsilon}\log {1 \over \varepsilon}\right)$ & $O\left({Rd \over \varepsilon}\log {R \over \varepsilon}\right)$ & - & - \\ \hline
\dsfd (This paper) & $O\left(d \over \varepsilon\right)$ & $O\left({d \over \varepsilon} \log R\right)$ & $O\left({d \over \varepsilon} \log \varepsilon N\right)$ & $O\left({d \over \varepsilon} \log \varepsilon NR\right)$ \\
Lower bound (This paper) & $\Omega\left({d\over \varepsilon}\right)$ & $\Omega\left({d \over \varepsilon} \log R\right)$ & $\Omega\left({d \over \varepsilon} \log \varepsilon N\right)$ & $\Omega\left({d \over \varepsilon} \log \varepsilon NR\right)$ \\ 
\bottomrule
\end{tabular}
\end{table*}

Many types of real-world streaming data, such as computer networking traffic, social media content, and sensor data, are continuously generated, often arriving in large volumes or at high speeds~\cite{gaber2005mining,zhou2023stream}. Given the constraints in storage and computational resources, it becomes frequently impractical to store or compute aggregations and statistics for streaming data accurately. Algorithms for streaming data provide approximate solutions by summarizing, sketching, or synthesizing the data stream with sublinear space or time complexity relative to the input size~\cite{muthukrishnan2005data,zeng2022persistent}. Among various streaming data algorithms, matrix sketching emerges as a general technique designed to process streaming data comprised of vectors or matrices~\cite{woodruff2014sketching}. A wide array of matrix sketching algorithms has been proposed, categorized into several approaches: sparsification~\cite{arora2006fast,drineas2011note}, sampling~\cite{rudelson2007sampling,deshpande2006adaptive,arora2006fast}, random projection~\cite{sarlos2006improved,vempala2005random}, hashing~\cite{weinberger2009feature,clarkson2017low}, and \fd(\textsf{FD})~\cite{liberty2013simple,ghashami2016frequent}. These algorithms typically present trade-offs between time-space complexity and accuracy. Notably, \fd~\cite{liberty2013simple,ghashami2016frequent}, a prominent deterministic algorithm, achieves a spectral bound on the relative covariance error, expressed as \(\varepsilon = \lVert \bm{A}^\top \bm{A} - \bm{B}^\top \bm{B}\rVert_2/\lVert\bm{A}\rVert_F^2\), with an optimal space complexity of \(O\left(\frac{d}{\varepsilon}\right)\). These attributes have led to its widespread application across various fields~\cite{luo2019robust,feinberg2024sketchy,dickens2020ridge,chen2021efficient,chen2022efficient}.

In real-world scenarios, the interest often lies in the most recently arrived elements rather than outdated items within data streams, as highlighted by recent studies~\cite{wei2016matrix,zhang2017tracking,cormode2020small}. Datar et al.~\cite{datar2002maintaining} consider the problem of maintaining aggregates and statistics from the most recent period of the data stream and refer to such a model as the \textit{sliding window model}. This paper delves into the \textit{continuous tracking matrix sketch over sliding windows}, a crucial technique for applications like \textsf{sliding window PCA} or \textsf{real-time PCA}~\cite{chowdhury2020real,wei2016matrix}. Such techniques play a vital role across various domains, including event detection~\cite{rafferty2016real}, fault monitoring~\cite{sheriff2017fault,ammiche2018modified}, differential privacy~\cite{upadhyay2021framework} and online learning~\cite{garivier2011upper}, highlighting the significance of optimizing both the complexity and the quality of estimations provided by matrix sketching algorithms for sliding windows.

Over the years, extensive research has been dedicated to developing improved sketching algorithms to address the challenge of matrix sketching over sliding windows. For instance, Wei et al.~\cite{wei2016matrix} proposed the Sampling, \lmfd, and \difd algorithms. Sampling is a probabilistic algorithm, and \lmfd and \difd are deterministic algorithms that build upon \fd. In addition, streaming matrix sketching algorithms based on \textit{random projection}~\cite{boutsidis2014near} and \textit{hashing}~\cite{clarkson2017low}, such as \textsf{LM-HASH}, \textsf{DI-RP}, and \textsf{DI-HASH}, are also compatible with the sliding window model. Zhang et al.~\cite{zhang2017tracking} explored the challenge of tracking matrix approximations over distributed sliding windows, proposing communication-efficient algorithms like priority sampling, ES sampling, and DA. Braverman et al.~\cite{braverman2020near} present a randomized row sampling framework for a wide spectrum of linear algebra approximation problems and a unified framework for deterministic algorithms in the sliding window model based on the merge-and-reduce paradigm and the online coresets. Furthermore, Shi et al.~\cite{shi2021time} also present a method to extend \fd to the persistent summary model, another historical range query model similar to the sliding window. Table~\ref{tab:alg} compares the memory cost of various matrix sketching algorithms over sliding windows.

\header{\bf Motivations.} Despite the significant efforts mentioned above, existing algorithms for matrix sketching over sliding windows remain sub-optimal in terms of space complexity. Table~\ref{tab:alg} illustrates that for the fundamental case where the window is sequence-based (i.e., each update occupies a timestamp) and each row is normalized (i.e., the norm of each row equals a constant), the current best space complexity for matrix sketching algorithms over sliding windows is \(O\left(\frac{d}{\varepsilon}\log \frac{1}{\varepsilon} \right)\) for \difd and \(O\left(\frac{d}{\varepsilon^2}\right)\) for \lmfd~\cite{wei2016matrix}. Conversely, it has been demonstrated in~\cite{ghashami2016frequent} that the space lower bound for \fd in the full stream model is \(\Omega\left(\frac{d}{\varepsilon}\right)\), thus establishing the optimality of \fd. Given that the sliding window model poses greater challenges than the full stream model, \(\Omega\left(\frac{d}{\varepsilon}\right)\) also represents a space lower bound for the sliding window model. This observation leads to a compelling inquiry: \textbf{Is it possible to achieve the optimal \(O\left(\frac{d}{\varepsilon}\right)\) space bound for the problem of matrix sketching over the sliding window model?}

To address the question, it is crucial to recognize that the existing methods fail to achieve optimal space complexity primarily because they merely integrate \fd with generic sliding window algorithmic frameworks. This approach lacks a profound examination and enhancement of the original \fd algorithm specifically tailored for sliding windows. For instance, \lmfd applies \fd within the Exponential Histogram (EH) framework~\cite{datar2002maintaining}, and \difd combines \fd with the Dyadic Interval (DI) framework~\cite{wei2016matrix}. While these well-known frameworks are adept at adapting various streaming algorithms, such as Misra-Gries and SpaceSaving, to the sliding window model, they inherently introduce a multiplicative increase in memory overhead~\cite{lee2006simpler}. To develop more space-efficient sliding window algorithms, it is often necessary to undertake modifications directly on the streaming algorithms. This paper takes such an approach with \fd, aiming to refine and optimize it specifically for the sliding window context.

\subsection{Our Contributions}
In this paper, we introduce \dsfd, a deterministic algorithm that achieves optimal space complexity for matrix sketching over sliding windows. Specifically, \dsfd reaches an optimal space complexity of \(O\left(\frac{d}{\varepsilon}\right)\) for sequence-based sliding windows with normalized rows. When the norms of the rows are not normalized and falling within the range of \([1, R]\), the space complexity naturally expands to \(O\left(\frac{d}{\varepsilon}\log R\right)\). For time-based sliding windows, where each timestamp may not correspond to a row update, the space complexities are adjusted to \(O\left(\frac{d}{\varepsilon}\log \varepsilon N\right)\) (row-normalized) and \(O\left(\frac{d}{\varepsilon}\log \varepsilon N R\right)\) (row-unnormalized), respectively. These space complexities are detailed in the second-to-last row of Table~\ref{tab:alg}.

Furthermore, we establish the lower bound of space complexity for any deterministic matrix sketching algorithm over sliding windows. Surprisingly, the corresponding lower bounds for the four models are also \(\Omega\left(\frac{d}{\varepsilon}\right)\), \(\Omega\left(\frac{d}{\varepsilon}\log R\right)\), \(\Omega\left(\frac{d}{\varepsilon}\log \varepsilon N\right)\), and \(\Omega\left(\frac{d}{\varepsilon}\log \varepsilon NR\right)\), as detailed in the last row of Table~\ref{tab:alg}. This signifies that we have achieved optimal space complexity across the four distinct models. The specific contributions of this paper are outlined below:

\begin{itemize}[leftmargin = *]
    \item \textbf{Novel Algorithm with Improved Complexity:} We theoretically demonstrate the superiority of our \dsfd algorithm over existing methods in matrix sketching over sliding windows. Specifically, in the second-last column of Table~\ref{tab:alg}, we highlight the theoretical improvements of \dsfd compared to existing methods. The space complexity \(O\left(\frac{d}{\varepsilon}\right)\) of \dsfd is more efficient than that of two leading algorithms, \(O\left(\frac{d}{\varepsilon^2}\right)\) for \lmfd and \(O\left(\frac{d}{\varepsilon} \log \frac{1}{\varepsilon}\right)\) for \difd. More importantly, our \dsfd algorithm implements modifications on the \fd based on our deeper insight into its application in sliding windows, rather than merely incorporating \fd into a generic sliding window framework, as done by \lmfd and \difd. Our \dsfd offers a deterministic error bound and applies to both sequence-based and time-based windows, achieving an amortized update time of \(O\left(d\ell\right)\), which is on par with the fastest FD in the full stream setting. 

    
    \item \textbf{Matching Lower Bounds:} We establish matching lower bounds for any deterministic matrix sketching algorithm over sliding windows. Our proof is inspired by techniques used in the space lower bound proofs of \textsc{BasicCounting}~\cite{datar2002maintaining} and other streaming matrix sketching problems~\cite{ghashami2016frequent}. This validation confirms that our \dsfd algorithm is optimal in terms of space complexity.


    \item \textbf{Extensive Experiments:} We conduct comprehensive experimental studies to verify the superiority of \dsfd over other state-of-the-art algorithms, especially in terms of sketch memory usage. Our experimental results reveal that the trade-off between error bounds and space cost for the \dsfd algorithm is more favorable than existing algorithms on synthetic and real-world datasets. Furthermore, optimizing space cost becomes increasingly significant as the permissible upper bound of covariance relative error tightens. These experimental findings are in strong agreement with our theoretical analyses.
\end{itemize}
\section{Preliminaries}

This section introduces widely-used problem definitions of matrix sketching over sliding windows and several fundamental concepts related to this topic.

\subsection{Problem Definition}

\begin{prob}[Matrix Sketching over Sliding Window]
\label{prob:sw-fd}
Suppose we have a data stream where each item is in the set $\mathbb{R}^{d}$. Given the error parameter $\varepsilon$ and window size $N$, the goal is to maintain a matrix sketch $\kappa$ such that, at the current time $T$, $\kappa$ can return an approximation $\bm{B}_W$ for the matrix $\bm{A}_W=\bm{A}_{T-N,T} \in \mathbb{R}^{N\times d}$, stacked by the recent $N$ items. The approximation quality is measured by the \textit{covariance error}, such that:
\begin{equation*}
    \textup{\textbf{cova-error}}(\bm{A}_W, \bm{B}_W) = \lVert \bm{A}^\top_W\bm{A}_W - \bm{B}^\top_W\bm{B}_W\rVert_2 \le \varepsilon \lVert\bm{A}_W\rVert_F^2,
\end{equation*}
where $N$ bounds the maximum size of window $W$.
\end{prob}

Wei et al.~\cite{wei2016matrix} show that maintaining a deterministic sketch with space less than $o(Nd)$ is not feasible if the maximum norm of row vectors of $\bm{A}_W$ is unbounded, even when a large covariance error is permissible. Consequently, contemporary research concentrates on problem variants where the maximum possible squared norm of all rows is upper-bounded. We define the following four variants of the \swfd problem. The paper will sequentially introduce algorithms to tackle these four problems, starting with the simplest, the sequence-based normalized window model, to illustrate the concept of dump snapshots. This approach serves as a foundation for addressing the unnormalized or the time-based model. We begin by formulating the problem for the sequence-based normalized window model, which is the simplest to solve and analyze.

\begin{subprob}[Matrix Sketching over Sequence-based Normalized Sliding Window]
\label{prob:norm-seq-sw-fd}
We assume the squared norms of the rows in the window take value 1, that is, $\lVert \bm{a}\rVert^2_2=1$ for all $\bm{a}\in W$. Therefore, the \textit{covariance error} is,
$$
\textup{\textbf{cova-error}}(\bm{A}_W, \bm{B}_W) = \lVert \bm{A}^\top_W\bm{A}_W - \bm{B}^\top_W\bm{B}_W\rVert_2 \le \varepsilon \lVert\bm{A}_W\rVert_F^2 = \varepsilon N.
$$
\end{subprob}

Next, we eliminate the normalization restriction, resulting in the following variant of the problem definition.

\begin{subprob}[Matrix Sketching over Sequence-based Sliding Window]
\label{prob:seq-sw-fd}
We assume the squared norms of the rows in the window range from $[1, R]$, that is, $1\le \lVert \bm{a}\rVert^2_2\le R$ for all $\bm{a}\in W$.
\end{subprob}

If we account for real-world time instead of sequential order, the problems of both the normalized and unnormalized versions can be stated as follows.

\begin{subprob}[Matrix Sketching over Time-based Normalized Sliding Window]
\label{prob:norm-time-sw-fd}
We assume the squared norms of the rows in the window are either $0$ or $1$, that is, $\lVert \bm{a}\rVert^2_2 = 0$ or $\lVert \bm{a}\rVert^2_2 = 1$ for all $\bm{a}\in W$.
\end{subprob}

\begin{subprob}[Matrix Sketching over Time-based Sliding Window]
\label{prob:time-sw-fd}
We assume the squared norms of the rows in the window range from $\{0\}\cup [1,R]$, that is, $\lVert \bm{a}\rVert^2_2=0$ or $1\le \lVert \bm{a}\rVert^2_2\le R$ for all $\bm{a}\in W$.
\end{subprob}

\subsection{\fd}

\fd~\cite{liberty2013simple,ghashami2016frequent} is a deterministic algorithm of matrix sketching in the row-update model. It processes one row vector \(\bm{a}_i \in \mathbb{R}^{1 \times d}\) at a time, accumulating a stream of row vectors to construct a matrix \(\bm{A} \in \mathbb{R}^{n \times d}\), where \(d \ll n\). Let $\ell=1/\varepsilon$, \fd takes amortized \(O(d\ell)\) time per row and maintains a sketch matrix \(\bm{B} \in \mathbb{R}^{\ell \times d}\), achieving an error bound of
\begin{equation*}
    \textbf{cova-error}(\bm{A}, \bm{B}) = \lVert \bm{A}^\top\bm{A} - \bm{B}^\top\bm{B}\rVert_2 \le \varepsilon \lVert\bm{A}\rVert_F^2.
\end{equation*}

To process each arriving row vector, we first check whether \(\bm{B}\) has any zero-valued rows. If \(\bm{B}\) contains a zero-valued row, we insert \(\bm{a}_i\) into it. Otherwise, we perform a Singular Value Decomposition (SVD) \([\bm{U},\bm{\Sigma}, \bm{V}^\top] = \texttt{svd}(\bm{B})\), rescale the "directions" in \(\bm{B}\) with the \(\ell\)-th largest singular value \(\sigma_\ell\), and "forget" the least significant direction in the column space of \(\bm{V}^\top\). The updated \(\bm{\Sigma}^\prime\) and sketch matrix \(\bm{B}^\prime\) are computed as follows:
\begin{equation*}
    \Sigma^\prime = \text{diag}\left(\sqrt{\sigma_1^2-\sigma_\ell^2}, \dots, \sqrt{\sigma_{\ell-1}^2-\sigma_\ell^2}, 0, \dots, 0 \right)
\end{equation*}
and \(\bm{B}^\prime = \bm{\Sigma}^\prime \bm{V}^\top\), where \(\sigma_1^2 \ge \sigma_2^2 \ge \dots \ge \sigma_{2\ell}^2\). The updated \(\bm{B}^\prime\) will have at least \(\ell+1\) nonzero rows, allowing the process to continuously accommodate the next arriving row vector and update the sketch matrix \(\bm{B}\) in the same manner as described above.

\fd can also be adapted for the sliding window model. We briefly introduce two algorithms based on it, which are the main competitors to our novel algorithms:
\begin{itemize}[leftmargin = *]
    \item \lmfd applies Exponential Histogram~\cite{datar2002maintaining} to \fd. It organizes the window into exponentially shrinking block sizes, with the most recent level having blocks of size $\ell$ and the oldest level being $\varepsilon \lVert \bm{A}_W \rVert_F^2$. As shown in Table~\ref{tab:alg}, it uses $O\left(\frac{d}{\varepsilon^2}\right)$ space for sequence-based normalized sliding windows. 
    \item \difd adopts the Dyadic Interval~\cite{arasu2004approximate} approach to \fd, maintaining $L=\log \frac{R}{\varepsilon}$ parallel levels. At the $j$-th level, the matrix over the sliding window is segmented into sizes of $\Theta\left(\frac{NR}{2^{L-i}}\right)$. \difd uses $O\left(\frac{d}{\varepsilon}\log \frac{1}{\varepsilon}\right)$ space for sequence-based normalized sliding windows.
\end{itemize}

\section{Our method}

In this section, we present a fundamental method that addresses Problem~\ref{prob:norm-seq-sw-fd}, termed \textbf{Dump Snapshots (\textsf{DS})}. This method draws inspiration from the $\lambda$-snapshot method employed for solving the $\varepsilon$-approximate frequent items problem over the sliding window~\cite{lee2006simpler} and its connection to matrix sketching over the sliding window~\cite{ghashami2016frequent}. 
The algorithms developed for solving both the unnormalized and time-based problems are also based on this approach.


\header{\textbf{Connection to Item Frequency Estimation.}} The matrix sketching problem over sliding windows has a significant relation to the \textit{$\varepsilon$-approximation frequent items} problem over sliding windows. The definition of this latter problem is outlined in~\cite{zhang2008frequency,lee2006simpler}:

\begin{prob}[$\varepsilon$-approximation Frequent Items over Sliding Window]
\label{prob:swfe}
Consider a data stream where each item belongs to the set $[d]$. Given an error parameter $\varepsilon$ and a window size $N$, the goal is to maintain a sketch $\kappa$ such that that ensures, for any item $i \in [d]$, the error between its true frequency $f_i$ and estimated frequency $\hat{f}_i$ over the most recent $N$ items returned by $\kappa$ is constrained by:
\begin{equation*}
    \left| f_i-\hat{f}_i\right| \le \varepsilon N.
\end{equation*}
\end{prob}

For the Problem~\ref{prob:swfe}, each element arriving in the data stream $\bm{A}_W=\{\bm{a}_1,\bm{a}_2,\dots,\bm{a}_n\}$ can be viewed as a one-hot indicator vector, denoted as $\bm{a}_i \in \{\bm{e}_1, ..., \bm{e}_d\}$, where $\bm{e}_j$ represents the $j$-th standard basis vector. The precise frequency of item $j$ can be expressed as $f_j=\lVert \bm{A}_W \bm{e}_j \rVert_2^2$, which corresponds to the $j$-th element on the diagonal of matrix $\bm{A}_W^\top \bm{A}_W$. Assuming an algorithm that offers an estimated frequency $\hat{f_i}$ for each item $i$, we designate the $i$-th row of matrix $\bm{B}$ as $\hat{f}_i^{1/2}\cdot \bm{e}_i$. Consequently, the error bound for the Problem~\ref{prob:swfe}, $\max_i \left| f_i-\hat{f_i}\right| \le \varepsilon N$, can be expressed as $\lVert \bm{A}^\top_W\bm{A}_W - \bm{B}^\top_W\bm{B}_W\rVert_2 \le \varepsilon N$, aligning with Problem~\ref{prob:norm-seq-sw-fd}. Hence, the Problem~\ref{prob:swfe} over sliding window can be regarded as a special case of matrix sketching problem outlined in Problem~\ref{prob:norm-seq-sw-fd}, where the incoming row vectors are mutually orthogonal.

The Misra-Gries (MG) summary is a well-known algorithm for addressing the \textit{$\varepsilon$-approximation frequent items} problem~\cite{misra1982finding}. It operates by maintaining $1/\varepsilon$ counters, each initially set to 0. Upon encountering a new item, the algorithm increments its corresponding counter if it already has one; otherwise, it assigns a counter to it, setting its value to 1. If, following this increment, all $1/\varepsilon$ counters possess a value greater than 0, the algorithm decrements each counter by one, ensuring that there will be at least one counter reset to 0. This method effectively keeps track of the most frequent items within a data stream. 

\header{\bf Extension of MG summary to sliding window.} The MG summary can be effectively adapted to the sliding window model by the $\lambda$-snapshot method introduced by Lee et al.~\cite{lee2006simpler}. It preserves the error-bound guarantee without additional space or time complexity. This method's core concept is to log the event "item $i$ has appeared $\varepsilon N$ times" along with the current timestamp and expire these logs in time. Similar to the original MG summary, the $\lambda$-snapshot method demands $O(1/\varepsilon)$ time for both updates and queries, and requires $O(1/\varepsilon)$ space. Moreover, an optimization has been proposed that can further diminish the time complexity for updates and queries to $O(1)$, while keeping the space complexity unchanged~\cite{zhang2008frequency}.

\begin{figure}[h]
    \centering
    \includegraphics[width=\linewidth]{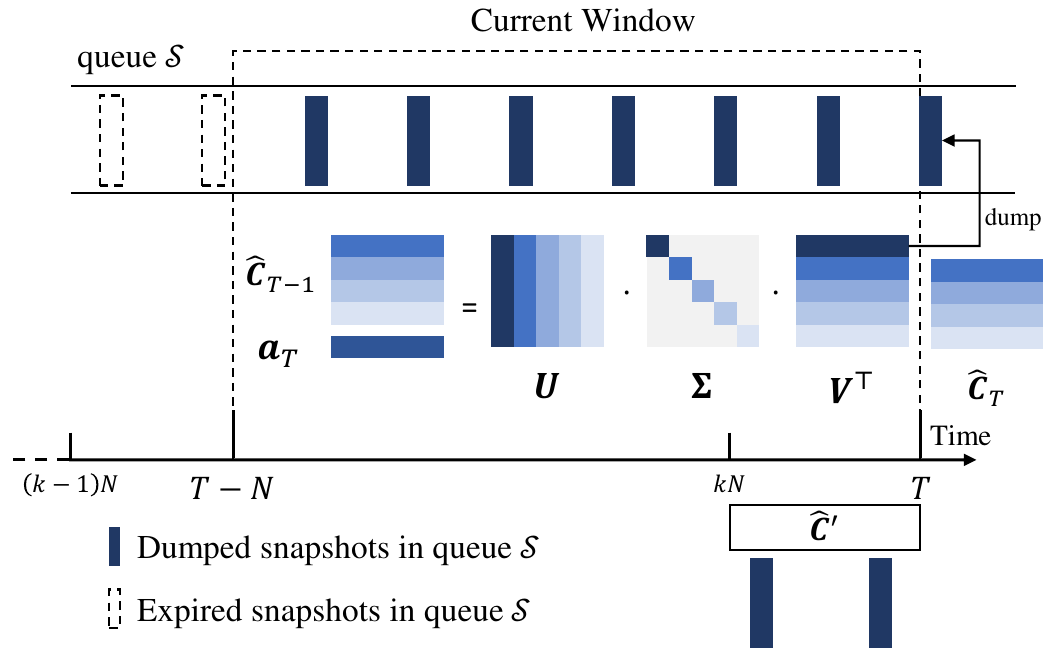}
    \caption{The data structures and update steps of \dsfd entail performing an SVD decomposition $\bm{U\Sigma V^\top} = \texttt{svd}([\hat{\bm{C}}_{T-1}, \bm{a}_T])$ for each update. Following the decomposition, singular values and their corresponding right singular vectors are evaluated against the error bound $\varepsilon N$. Those exceeding the bound are "dumped," i.e., removed from the current sketch and stored as snapshots in a queue $\mathcal{S}$, accompanied by the current timestamp.}
    \label{fig:dsfd}
\end{figure}

Given the connection outlined above, where both the FD sketch and the $\lambda$-snapshot method derive from the principles of the MG summary, a compelling question emerges: Is it possible to extend the FD matrix sketch to the sliding window model while maintaining the same error-bound guarantee and the same space and time complexity as done by the $\lambda$-snapshot method? This question paves the way for the introduction of the \dsfd method.

\header{\bf Extension of FD Summary to Sliding Window.} The core concept of \dsfd is illustrated in Figure~\ref{fig:dsfd}: for each FD update, after performing an SVD $\left[\bm{U},\bm{\Sigma},\bm{V}^\top\right] = \texttt{svd}(\left[\hat{\bm{C}}_{T-1}, \bm{a}_T\right])$, we "dump" the singular values and their corresponding right singular vectors if the singular values exceed the error threshold $\varepsilon N$. These vectors are stored as snapshots in a queue $\mathcal{S}$ with the timestamps. When a query is invoked, we combine the vectors from the snapshots within the current window's timeframe with the FD sketch to produce a matrix $\bm{B}_W$, which is then provided as the sketch.

Despite the simplicity of this approach, proving its correctness and achieving efficient implementation pose significant challenges. The main difficulties are: (1) The change in the orthogonal basis formed by the right singular vectors from the SVD of $\bm{A}_W$ before and after a single-step update, which results in past snapshots being non-orthogonal with current right singular vectors, complicating the error analysis. We prove the error bound for our algorithm as Theorem~\ref{thm:norm-seq-sw-fd}. (2) The challenge of accurately recording the timestamps of dumped snapshot vectors, which complicates the direct application of the FAST-FD algorithm. We also try to optimize it and propose the \fastdsfd. 

The space requirement for \dsfd is $O(d/\varepsilon)$, aligning with the space lower bound of FD, thus confirming the optimality of our algorithm in terms of space usage. Additionally, we plan to extend our algorithms to address more general scenarios, including arriving vectors with norms $\lVert \bm{a}_i\rVert_2^2 \in [1, R]$, in Sections~\ref{sec:seq-dsfd} and~\ref{sec:time-dsfd}.

\subsection{Algorithm Description}

\begin{algorithm}[h]
	\caption{ \dsfd: \textsc{Initialize}($d, \ell, N, \theta$)}
    \label{alg:norm-seq-sw-fd-init}
	\KwIn{ 
        $d$: Dimension of input vectors of \textsc{Update}\\
        $\ell=\min\left(\lceil\frac{1}{\varepsilon}\rceil, d\right)$: Number of rows in FD sketch \\
        $N$: Length of sliding window\\
        $\theta$: Dump threshold. For \textsc{Problem} \ref{prob:norm-seq-sw-fd}, $\theta=\varepsilon N$ \\
    }
    $\hat{\bm{C}}\leftarrow \bm{0}_{\ell\times d}$ \\
    $\hat{\bm{C}}^\prime \leftarrow \bm{0}_{\ell\times d}$ \\
    Queue of snapshots $\mathcal{S}\leftarrow \mathsf{queue}.\textsc{Initialize()}$\\
    Auxiliary queue of snapshots $\mathcal{S}^\prime \leftarrow \mathsf{queue}.\textsc{Initialize()}$ \\
\end{algorithm}

\header{\bf Data structures.} Figure~\ref{fig:dsfd} and Algorithm~\ref{alg:norm-seq-sw-fd-init} show the data structures of \dsfd. At any given moment, \dsfd maintains two FD sketches: a primary sketch $\hat{\bm{C}}$ and an auxiliary sketch $\hat{\bm{C}}^\prime$. Each sketch is associated with its corresponding queue of snapshots, $\mathcal{S}$ and $\mathcal{S}^\prime$, respectively. By setting $\ell=\min\left(d, \lceil 1/\varepsilon\rceil\right)$, the memory requirements for both the residual matrix $\hat{\bm{C}}$ and the queue of snapshots $\mathcal{S}$ are $O(d/\varepsilon)$, leading to a total memory cost of $O(d/\varepsilon)$.

\begin{algorithm}[h]
	\caption{ \dsfd: \textsc{Update}($\bm{a}_i$)}
    \label{alg:norm-seq-sw-fd-update}
	\KwIn{$\bm{a}_i$: the row vector arriving at timestamp $i$\\}

    \If{$i\equiv 1 \mod N$}{
        $\hat{\bm{C}} \leftarrow \hat{\bm{C}}^\prime$\\
        $\hat{\bm{C}}^\prime \leftarrow \bm{0}_{\ell\times d}$\\
        $\mathcal{S} \leftarrow \mathcal{S}$\\
        $\mathcal{S}^\prime \leftarrow \mathsf{queue}.\textsc{Initialize()}$\\
    }
    \While(\tcp*[f]{oldest snapshot expired}){$\mathcal{S}[0].t+N\le i$}{
        $\mathcal{S}.\textsc{popleft}()$\\
    }
    $\hat{\bm{C}}\leftarrow \mathsf{FD}_\ell(\hat{\bm{C}}, \bm{a}_i)$  \\
    \label{algln:norm-dumped}
    \While{$\lVert \hat{\bm{c}}_1 \rVert_2^2 \ge \theta$}{ 
        $\mathcal{S}$ append snapshot $(\bm{v}=\hat{\bm{c}}_1, s=\mathcal{S}[-1].t+1, t=i)$\\
        Remove first row $\hat{\bm{c}}_1$ from $\hat{\bm{C}}$ \tcp*[f]{$\hat{\bm{c}}_2^\prime$ becomes $\hat{\bm{c}}_1^\prime$} \\
    }
    $\hat{\bm{C}}^\prime \leftarrow \mathsf{FD}_\ell(\hat{\bm{C}}^\prime, \bm{a}_i)$  \\
    \While{$\lVert \hat{\bm{c}}^\prime_1 \rVert_2^2 \ge \theta$}{ 
        $\mathcal{S}^\prime$ append snapshot $(\bm{v}=\hat{\bm{c}}^\prime_1, s=\mathcal{S}^\prime[-1].t+1, t=i)$\\
        Remove first row $\hat{\bm{c}}^\prime_1$ from $\hat{\bm{C}}^\prime$ \tcp*[f]{$\hat{\bm{c}}_2^\prime$ becomes $\hat{\bm{c}}_1^\prime$} \\
    }
\end{algorithm}

\header{\bf Update algorithm.} Algorithm~\ref{alg:norm-seq-sw-fd-update} outlines the pseudocode for updating a \dsfd sketch, under the assumption that all arriving row vectors $\bm{a}_i$ are normalized, i.e., $\lVert \bm{a}_i\rVert_2^2 = 1$. The process starts by removing the oldest snapshots from the main queue until no snapshot in the queue is expired (lines 6-7). The input row vector then updates both the main and auxiliary FD sketches. If the squared norm of the top singular value multiplied by its corresponding right singular vector, $\lVert \hat{\bm{c}}_1 \rVert_2^2$, in any FD sketch surpasses the error threshold (evaluated in lines 9 and 13), this component is recorded alongside the current timestamp as a snapshot at the end of the respective queue (lines 10 and 14), before its removal from the FD sketch (lines 11 and 15).

Lines 1 to 5 describe the procedure of swapping the current main sketch and its queue with the auxiliary sketch and its queue, alongside initializing a new empty auxiliary sketch and queue of snapshots every $N$ update step. This methodology, named as \textit{restart every $N$ steps}, contributes to maintaining the algorithm's error bound which is detailed in the proof of Theorem~\ref{thm:norm-seq-sw-fd}.

The predominant time complexity for Algorithm~\ref{alg:norm-seq-sw-fd-update} arises from executing two FD updates, each necessitating an SVD on an $\ell \times d$ matrix. Hence, the total time complexity for each update step of \dsfd is $O(d\ell^2)$.

\header{\bf Optimized Update Algorithm. } We introduce optimization strategies to enhance the update algorithm for the primary \dsfd. The efficiency of each update step in Algorithm~\ref{alg:norm-seq-sw-fd-update} is dominated by the computation of $\textsf{FD}_{\ell}(\hat{\bm{C}}, \bm{a}_i)$, which traditionally requires $O(d\ell^2)$ time. Ghashami et al.~\cite{ghashami2016frequent} have introduced the \textsf{Fast-FD} technique, effectively reducing the update time complexity of FD from $O(d\ell^2)$ to an amortized $O(d\ell)$. This optimization is achieved by performing a merge operation between the matrix formed by the arrival of $\ell$ vectors and the FD sketch every time $\ell$ vectors accumulate, updating the FD sketch with the merged result.

Expanding upon this principle, we propose the \fastdsfd algorithm, which decreases the time complexity for a single update operation in \dsfd from $O(d\ell^2)$ to an amortized $O(d\ell+\ell^3)$, without incurring additional spatial complexity. Particularly for scenarios where $d = \Omega(\ell^2)$, the amortized complexity is effectively $O(d\ell)$.

Algorithm~\ref{alg:norm-seq-sw-fd-update2} details the optimized update operation for \fastdsfd, adopting the \textsf{Fast-FD} strategy by postponing the SVD operation until $\ell$ rows are prepared for merging. This deferred condition is met once every $\ell$ iterations, allocating the majority of the computational effort to the SVD (line 6) and the computation of the new matrix $\bm{K} = \hat{\bm{C}}\hat{\bm{C}}^\top$ (line 10). Each of these steps requires $O(d\ell^2)$ time, thereby resulting in an amortized time cost of $O(d\ell)$. This approach significantly enhances the update mechanism within the sliding window framework, ensuring the algorithm remains efficient and effective.

\begin{algorithm}[h]
	\caption{\textsf{\fastdsfd}: Optimized \textsc{Update} algorithm on Algorithm \ref{alg:norm-seq-sw-fd-update}.}
    \label{alg:norm-seq-sw-fd-update2} 
	\KwIn{$\bm{a}_i$: the row vector arriving at timestamp $i$\\}
    
    \While(\tcp*[f]{oldest snapshot expired}){$\mathcal{S}[0].t+N\le i$}{
        $\mathcal{S}.\textsc{popleft}()$\\
    }
    $\bm{D}\leftarrow \begin{bmatrix}
        \hat{\bm{C}}^\top& \bm{a}_i^\top
    \end{bmatrix}^\top$\\
    $\hat{\sigma_1} \leftarrow \sqrt{\hat{\sigma_1}^2+\lVert \bm{a}_i \rVert _2^2}$ \\
    
    \If{\#(rows of $\hat{D}$) $\ge 2\ell$}{
        $\hat{\bm{C}}\leftarrow \mathsf{FastFD}_\ell(\bm{D})$  \\
        \While{$\lVert \hat{\bm{c}}_1 \rVert_2^2 \ge \theta$}{
            $\mathcal{S}$ append snapshot $(\bm{v}=\hat{\bm{c}}_1, s=\mathcal{S}[-1].t+1, t=i)$\\
            Remove first row $\hat{\bm{c}}_1$ from $\hat{\bm{C}}$. \tcp*[f]{$\hat{\bm{c}}_2^\prime$ becomes $\hat{\bm{c}}_1^\prime$} \\
        }
        $\bm{K} = \hat{\bm{C}}\hat{\bm{C}}^\top$ \\
        $\hat{\sigma_1}\leftarrow \lVert \hat{\bm{c}}_1 \rVert_2$
    }
    \Else{
        $\bm{K} \leftarrow \begin{bmatrix}
            \bm{K} & \bm{a}_i \hat{\bm{C}}^\top\\
            \hat{\bm{C}} \bm{a}_i^\top & \bm{a}_i\bm{a}_i^\top\\
        \end{bmatrix}$\\
        \If{$\hat{\sigma_1}^2\ge \theta$}{
            $[\bm{U},\bm{\Sigma}^2,\bm{U}^\top]\leftarrow \mathsf{svd}(\bm{K})$\\
            \For{$j \in [1,\ell]$}{
                \If(\tcp*[f]{largest singular value}){$\sigma_j^2 \ge \theta$}{
                    $\bm{v}_j^\top \leftarrow \frac{1}{\sigma_j} \bm{u}_j^\top \bm{D} $ \\
                    $\mathcal{S}$ append snapshot $(\bm{v}=\bm{v}_j, s=\mathcal{S}[-1].t+1, t=i)$\\
                    $\bm{D} \leftarrow \bm{D} - \bm{Dv}_j\bm{v}_j^\top$\\
                    $\bm{K} \leftarrow \bm{K} - (\bm{D}\bm{v}_j)(\bm{D}\bm{v}_j)^\top$
                }
            }
            $\hat{\sigma_1}\leftarrow \sigma_1$
        }
        $\bm{\hat{C}}\leftarrow \bm{D}$
    }
\end{algorithm}

In the general scenario where the "if" condition remains untriggered, it is essential to determine whether the maximum singular value $\sigma_1$ of the FD sketch $\bm{D} = \begin{bmatrix} \hat{\bm{C}}^\top & \bm{a}_i^\top \end{bmatrix}^\top$, after incorporating the newly arrived row vector $\bm{a}_i$, exceeds the threshold $\theta$ (line 17). If this criterion is met, the corresponding right singular vector $\bm{v}_1$ along with the current timestamp $i$ is archived as a snapshot (line 19). This step also involves reducing the influence of this right singular vector $\bm{v}_1$ from both the FD sketch $\bm{D}$ and the covariance matrix $\bm{K}$, incurring a time cost of $O(d\ell + \ell^2)$ (lines 20, 21).

Rather than performing an SVD directly on the FD sketch $\bm{D}$ to compute $\bm{D} = \bm{U\Sigma V}^\top$ within $O(d\ell^2)$, an incremental update to $\bm{K} = \bm{D}\bm{D}^\top$ is executed rank-1-wise in $O(d\ell)$ time (line 13). Following this, conducting SVD on $\bm{K}$ to obtain $\bm{K} = \bm{U\Sigma}^2\bm{U}^\top$ can be achieved in $O(\ell^3)$ (line 15). Then multiply the maximum singular value's corresponding left singular vector $\bm{u}_1$ with $\bm{D}$ to extract the right singular vector $\bm{v}_1^\top = \frac{1}{\sigma_1}\bm{u}_1^\top \bm{D}$ is completed in $O(d\ell)$ time (line 18).

Summarizing the aforementioned time expenditures, the amortized time complexity for a single update step in Algorithm~\ref{alg:norm-seq-sw-fd-update2} is established as $O(d\ell + \ell^3)$. Lemma~\ref{lem:fast-dsfd} supports that the action taken in line 20—specifically, removing the $j$-th row of $\bm{\Sigma V}^\top$—parallels the procedure described in line 9.

\begin{lem}
    \label{lem:fast-dsfd}
    If $\bm{D}=\bm{U\Sigma V}^\top$ and $\bm{D}^\prime = \bm{D}-\bm{Dv}_j\bm{v}_j^\top$, where $\bm{v}_j$ is one of row vector of $\bm{V}^\top$. Then $\bm{D}=\bm{U\Sigma V}^\top(\bm{I}-\bm{v}_j\bm{v}_j^\top)$, which is same as remove the $j$-th row of $\bm{\Sigma V}^\top$.
\end{lem}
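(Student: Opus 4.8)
The plan is to reduce the claimed identity to one short algebraic manipulation that exploits only the orthonormality of the right singular vectors. First I would factor the rank-one correction onto the right: since $\bm{D}\bm{v}_j\bm{v}_j^\top = \bm{D}(\bm{v}_j\bm{v}_j^\top)$, we have $\bm{D}^\prime = \bm{D} - \bm{D}\bm{v}_j\bm{v}_j^\top = \bm{D}(\bm{I} - \bm{v}_j\bm{v}_j^\top)$. Substituting the SVD $\bm{D} = \bm{U\Sigma V}^\top$ immediately gives $\bm{D}^\prime = \bm{U\Sigma V}^\top(\bm{I} - \bm{v}_j\bm{v}_j^\top)$, which is the first assertion of the lemma (read with $\bm{D}^\prime$ on the left-hand side).

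Next I would show that right-multiplication by $\bm{I} - \bm{v}_j\bm{v}_j^\top$ annihilates exactly the $j$-th row of $\bm{\Sigma V}^\top$ and leaves every other row unchanged. Writing $\bm{v}_i^\top$ for the $i$-th row of $\bm{V}^\top$ (equivalently, $\bm{v}_i$ is the $i$-th column of $\bm{V}$), orthonormality of the columns of $\bm{V}$ gives $\bm{v}_i^\top\bm{v}_j = \delta_{ij}$. Hence the $i$-th row of $\bm{V}^\top(\bm{I} - \bm{v}_j\bm{v}_j^\top)$ is $\bm{v}_i^\top - (\bm{v}_i^\top\bm{v}_j)\bm{v}_j^\top = \bm{v}_i^\top - \delta_{ij}\bm{v}_j^\top$, i.e.\ it equals $\bm{v}_i^\top$ for $i \neq j$ and $\bm{0}$ for $i = j$. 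Scaling the rows by $\bm{\Sigma}$ and then multiplying on the left by $\bm{U}$ (an orthonormal factor that is irrelevant to the FD sketch, which only uses $\bm{D}^\top\bm{D}$ and recovers the row norms $\lVert\hat{\bm{c}}_i\rVert_2 = \sigma_i$ from $\bm{\Sigma}$) shows that $\bm{D}^\prime$ is precisely $\bm{D}$ with its $j$-th sketch row deleted, as claimed.

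Finally, to tie this back to Algorithm~\ref{alg:norm-seq-sw-fd-update2}, I would verify that the covariance update is consistent: because $\bm{I} - \bm{v}_j\bm{v}_j^\top$ is a symmetric idempotent (an orthogonal projector), $\bm{D}^\prime\bm{D}^{\prime\top} = \bm{D}(\bm{I} - \bm{v}_j\bm{v}_j^\top)\bm{D}^\top = \bm{D}\bm{D}^\top - (\bm{D}\bm{v}_j)(\bm{D}\bm{v}_j)^\top = \bm{K} - (\bm{D}\bm{v}_j)(\bm{D}\bm{v}_j)^\top$, matching line~21. There is no real obstacle in this argument; the only points needing care are the notational convention (rows of $\bm{V}^\top$ versus columns of $\bm{V}$) and the remark that the residual orthonormal factor $\bm{U}$ does not affect the "row-removal" interpretation, since the sketch is determined up to such a rotation.
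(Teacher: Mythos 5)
Your proof is correct and follows essentially the same one-line algebraic idea as the paper: factor out the orthogonal projector $\bm{I}-\bm{v}_j\bm{v}_j^\top$ and invoke the SVD. The only difference is cosmetic --- the paper carries out the computation on the Gram matrix $\bm{D}^{\prime\top}\bm{D}^\prime$ and leaves the row-removal conclusion implicit, whereas you work on $\bm{D}^\prime$ directly and explicitly finish the argument via $\bm{v}_i^\top\bm{v}_j=\delta_{ij}$, which if anything makes your write-up slightly more complete.
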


Given that the covariance matrix $\bm{K}$ might have $m$ singular values surpassing the threshold $\theta$, arranged as $\sigma_1^2 \ge \sigma_2^2 \ge \dots \ge \sigma_m^2 \ge \theta$ increasingly, the "if" condition on line 17 within the loop could be validated $m$ times, leading to a total operation count of $O(md\ell)$. For the normalized model, these operations can be averaged over $m\theta$ steps, yielding an amortized time complexity for the loop of $O(d\ell/\theta)$. With $\theta = N/\ell$ as specified in Problem~\ref{prob:norm-seq-sw-fd}, this amortization results in $O(d\ell^2/N)$. Assuming $N = \Omega(\ell)$, the amortized time complexity for a single update step in Algorithm~\ref{alg:norm-seq-sw-fd-update2} remains $O(d\ell + \ell^3)$, which is a reasonable assumption.

Setting $\ell = 1/\varepsilon$, the residual matrix $\hat{\bm{C}}$ of \fastdsfd can extend up to $2\ell \times d$, and the covariance matrix $\bm{K}$ can reach dimensions of $2\ell \times 2\ell$, leading to a space complexity of $O(\ell d + \ell^2)$. When $2\ell \le d$, the space complexity simplifies to $O(\ell d)$; for $2\ell > d$, maintaining the residual matrix $\hat{\bm{C}}$ up to $d \times d$ suffices, keeping the space complexity at $O(d^2 + \ell d) = O(\ell d)$. Thus, compared to \dsfd, \fastdsfd improves the update operation without incurring additional asymptotic space complexity.

Furthermore, potential optimizations could be realized. By maintaining an upper bound estimate $\hat{\sigma_1}$ on the maximum singular value $\sigma_1$ of the current FD sketch and updating this estimate to $\sqrt{\hat{\sigma_1}^2 + \lVert \bm{a}_i \rVert_2^2}$ with each new vector, an SVD is warranted only if this updated estimate exceeds the dump threshold $\theta$ (line 16), possibly avoiding the $O(\ell^3)$ SVD computation at this stage. Since only the largest singular value and its associated singular vector $\bm{u}_1$ of $\bm{K}$ are needed, iterative eigenvalue methods like Power Iteration could be used to reduce the time complexity of SVD further. Alternatively, randomized Krylov methods, such as Block Krylov, could offer a high-probability rank-1 approximation, requiring $O(\log(\min(1/\varepsilon, d)/\varepsilon'^{1/2}))$ matrix-vector multiplications, where $\varepsilon'$ denotes the relative error of the rank-1 approximation. This adjustment might render \fastdsfd a probabilistic algorithm with an amortized update time complexity of $O(d\ell)$, marking a substantial efficiency enhancement~\cite{bakshi2023krylov,musco2015randomized}.

\header{\bf Query algorithm.} The query operation for \dsfd and \fastdsfd is elegantly simple, as delineated in Algorithm~\ref{alg:norm-seq-sw-fd-query}. It involves merging the FD sketch $\hat{\bm{C}}$ with matrix $\bm{B}$ stacked by the vectors of non-expiring snapshots. This process ensures that the query algorithm efficiently utilizes the preserved historical data within the sliding window, alongside the current sketch, to provide accurate matrix approximations.

\begin{algorithm}[h]
	\caption{\dsfd and \fastdsfd: \textsc{Query}()}
    \label{alg:norm-seq-sw-fd-query}
    \Return $\mathsf{FD}_\ell (\bm{B},\hat{\bm{C}})$, where $\bm{B}$ is stacked by $s_i.\bm{v}$ for all $s_i\in \mathcal{S}$
\end{algorithm}

\subsection{Analysis}

We present the following theorem about the error guarantee, space usage, and update cost of \dsfd and \fastdsfd.

\begin{theorem}
    \label{thm:norm-seq-sw-fd}
    Assume that the data stream of vectors is $\bm{A} = \left[\bm{a}_1, \bm{a}_2,...,\bm{a}_T\right]$ and $\forall{1\le i\le T}, \lVert \bm{a}_i \rVert_2 = 1$.
    Given the length $N$ of the sliding window and the relative error $\varepsilon$, the \dsfd or \fastdsfd algorithm returns a sketch matrix $\bm{B}_W$. If we set $\theta=\varepsilon N$ and $\ell=\min\left(\lceil \frac{1}{\varepsilon}\rceil,d\right)$, then we have:
    \begin{equation}
        \label{eq:norm-seq-swfd-error}
        \textup{\textbf{cova-err}}(\bm{A}_W, \bm{B}_W)=\left\lVert \bm{A}_{T-N,T}^{\top}\bm{A}_{T-N,T} - \bm{B}^{\top}_W\bm{B}_W \right\rVert_2 \le 4\varepsilon N,
    \end{equation}
    where $\bm{A}_W=\bm{A}_{T-N,T} = 
    \left[\bm{a}_{T-N+1},\bm{a}_{T-N+2},\dots,\bm{a}_{T}\right]^\top$. The \dsfd or \fastdsfd algorithm uses $O\left(\frac{d}{\varepsilon} \right)$ space and process an update in $O\left(d\ell^2 \right)$ time for \dsfd, $O\left(d\ell+\ell^3 \right)$ time for \fastdsfd or $O\left(d\ell \right)$ time for probablistic \fastdsfd.
\end{theorem}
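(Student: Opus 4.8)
The plan is to fix an arbitrary query time $T$ and bound the error of $\bm{B}_W=\mathsf{FD}_\ell([\bm{B}_{\mathrm{ne}};\hat{\bm{C}}])$, where $\hat{\bm{C}}$ is the current primary sketch and $\bm{B}_{\mathrm{ne}}$ stacks the vectors of the currently non-expired snapshots of $\mathcal{S}$. The setup is pure bookkeeping. Since a restart occurs every $N$ steps and replaces the primary pair $(\hat{\bm{C}},\mathcal{S})$ by the auxiliary pair (while re-initializing the latter), the pair active at time $T$ is exactly the one that was zeroed at the previous restart and has since been fed the contiguous block $\bm{a}_b,\dots,\bm{a}_T$ for some $b$ with $T-2N<b\le T-N$. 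Write $\bm{A}^{\mathrm{all}}=[\bm{a}_b,\dots,\bm{a}_T]^{\top}$ and $\bm{A}^{\mathrm{old}}=[\bm{a}_b,\dots,\bm{a}_{T-N}]^{\top}$; then $\|\bm{A}^{\mathrm{all}}\|_F^2=T-b+1\le 2N$, $\|\bm{A}^{\mathrm{old}}\|_F^2\le N$, the block $[b,T]$ contains the window $[T-N+1,T]$, and $\bm{A}^{\mathrm{all}\top}\bm{A}^{\mathrm{all}}=\bm{A}^{\mathrm{old}\top}\bm{A}^{\mathrm{old}}+\bm{A}_W^{\top}\bm{A}_W$. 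Moreover, a snapshot is never modified after creation and the one created at step $i$ is expired at $T$ iff $i\le T-N$; hence the expired snapshots in $\mathcal{S}$ are precisely those created by the primary's process up to step $T-N$. (The boundary case $T\le N$, where the window is the whole stream and no snapshot is expired, follows from a single application of the \fd guarantee below.)

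The main tool is a dump-robust form of the \fd invariant. For each step $\tau\in[b,T]$, let $\hat{\bm{C}}^{(\tau)}$ be the primary sketch after processing step $\tau$ and $\bm{v}_1,\dots,\bm{v}_{p(\tau)}$ the snapshots it has created so far; then
\[ \bm{E}^{(\tau)} \;:=\; \sum_{i=b}^{\tau}\bm{a}_i\bm{a}_i^{\top}-\hat{\bm{C}}^{(\tau)\top}\hat{\bm{C}}^{(\tau)}-\sum_{j\le p(\tau)}\bm{v}_j\bm{v}_j^{\top}\;\succeq\;\bm{0},\qquad \bm{E}^{(\tau)}\;\preceq\;\tfrac{\tau-b+1}{\ell}\,\bm{I}, \]
while also $\|\hat{\bm{C}}^{(\tau)}\|_F^2+\sum_{j\le p(\tau)}\|\bm{v}_j\|_2^2\le\tau-b+1$. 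This is just the usual \fd accounting: a dump merely relocates one row between $\hat{\bm{C}}$ and $\mathcal{S}$ and leaves $\hat{\bm{C}}^{\top}\hat{\bm{C}}+\sum_j\bm{v}_j\bm{v}_j^{\top}$ unchanged, so $\bm{E}^{(\tau)}$ telescopes into the sum of the per-step shrink increments, each of which is PSD, bounded in operator norm by the $\ell$-th squared singular value at that step, and costing $\ell$ times that in Frobenius mass. The crucial feature is that this reasoning works entirely with positive semidefinite Gram matrices and never with the evolving right-singular-vector basis, so it is indifferent to the non-orthogonality of past snapshots against the current singular directions — exactly the difficulty the paper flags. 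I also record that immediately after any step's dump loop every row of the primary sketch has squared norm below $\theta$, hence $\hat{\bm{C}}^{(\tau)\top}\hat{\bm{C}}^{(\tau)}\preceq\theta\,\bm{I}$.

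Now combine. Applying the invariant at $\tau=T-N$, where the snapshots created so far stack exactly to the expired part $\bm{B}_{\mathrm{ex}}$, gives $\bm{A}^{\mathrm{old}\top}\bm{A}^{\mathrm{old}}-\bm{B}_{\mathrm{ex}}^{\top}\bm{B}_{\mathrm{ex}}=\hat{\bm{C}}^{(T-N)\top}\hat{\bm{C}}^{(T-N)}+\bm{E}^{(T-N)}$, a PSD matrix with operator norm at most $\theta+\tfrac1\ell\|\bm{A}^{\mathrm{old}}\|_F^2\le\theta+\tfrac1\ell N\le 2\theta$ (taking $\ell=\lceil 1/\varepsilon\rceil\ge 1/\varepsilon$ and $\theta=\varepsilon N$; if $d<1/\varepsilon$ one runs with $\ell=\lceil 1/\varepsilon\rceil$ anyway — the sketches then have rank $<\ell$, so their shrinks vanish and all error terms only shrink — at space still $O(d/\varepsilon)$). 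Applying it at $\tau=T$, with $\bm{F}:=[\bm{B}_{\mathrm{ex}};\bm{B}_{\mathrm{ne}};\hat{\bm{C}}]$, gives $\bm{F}^{\top}\bm{F}=\bm{A}^{\mathrm{all}\top}\bm{A}^{\mathrm{all}}-\bm{E}^{(T)}$ with $\bm{0}\preceq\bm{E}^{(T)}\preceq 2\theta\,\bm{I}$ and $\|\bm{F}\|_F^2\le 2N$. Putting $\bm{P}:=[\bm{B}_{\mathrm{ne}};\hat{\bm{C}}]$, so that $\bm{P}^{\top}\bm{P}=\bm{F}^{\top}\bm{F}-\bm{B}_{\mathrm{ex}}^{\top}\bm{B}_{\mathrm{ex}}$, and subtracting the two identities yields
\[ \bm{A}_W^{\top}\bm{A}_W-\bm{P}^{\top}\bm{P}\;=\;\bm{E}^{(T)}-\bigl(\hat{\bm{C}}^{(T-N)\top}\hat{\bm{C}}^{(T-N)}+\bm{E}^{(T-N)}\bigr), \]
a difference of two PSD matrices each $\preceq 2\theta\,\bm{I}$, hence $\|\bm{A}_W^{\top}\bm{A}_W-\bm{P}^{\top}\bm{P}\|_2\le 2\theta$. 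Finally $\bm{B}_W=\mathsf{FD}_\ell(\bm{P})$ gives $\bm{0}\preceq\bm{P}^{\top}\bm{P}-\bm{B}_W^{\top}\bm{B}_W\preceq\tfrac1\ell\|\bm{P}\|_F^2\,\bm{I}$, and $\|\bm{P}\|_F^2=\|\bm{F}\|_F^2-\|\bm{B}_{\mathrm{ex}}\|_F^2\le 2N$, so this last term is at most $2N/\ell\le 2\varepsilon N=2\theta$. The triangle inequality then gives $\|\bm{A}_W^{\top}\bm{A}_W-\bm{B}_W^{\top}\bm{B}_W\|_2\le 4\theta=4\varepsilon N$, which is \eqref{eq:norm-seq-swfd-error}.

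For space, each of $\hat{\bm{C}},\hat{\bm{C}}'$ has $O(\ell)$ rows in $\mathbb{R}^d$, and each snapshot queue stores only snapshots created within the last $N$ steps; there are $O(\ell)$ of these, because every dump removes at least $\theta$ Frobenius mass while the mass passing through the sketch over any $N$-step span is at most $\|\hat{\bm{C}}\|_F^2+N=O(\ell\theta)$, so the total is $O(\ell d)=O(d/\varepsilon)$. For time, \dsfd performs two $\mathsf{FD}_\ell$ updates per step, each an SVD of an $O(\ell)\times d$ matrix, i.e.\ $O(d\ell^2)$, plus $O(1)$ amortized expirations and dumps of $O(d)$ each; the $O(d\ell+\ell^3)$ amortized bound for \fastdsfd and the $O(d\ell)$ bound for its randomized version are precisely the per-update analyses already given for Algorithm~\ref{alg:norm-seq-sw-fd-update2}, built on the \textsf{Fast-FD} deferral, the incremental maintenance of $\bm{K}=\hat{\bm{C}}\hat{\bm{C}}^{\top}$ justified by Lemma~\ref{lem:fast-dsfd}, and (randomized case) a Krylov/power-iteration extraction of the top singular pair. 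The step I expect to be the crux is the invariant of the second paragraph and, in particular, its use at $\tau=T-N$: the naive route — tracking how the singular basis of $\bm{A}_W$ moves as rows enter and leave — collides directly with the non-orthogonality of old snapshots against the current singular directions, whereas the Gram-matrix formulation makes that motion irrelevant and collapses the whole error into the one-line cancellation above; the genuinely delicate point is lining up ``which snapshots are expired'' with the primary's internal state at step $T-N$, which is why the restart-and-swap schedule must be pinned down first.
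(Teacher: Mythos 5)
Your proof is correct and follows essentially the same route as the paper's: the restart-every-$N$ bookkeeping to pin down the block $[b,T]$ fed to the active primary pair, the Gram-matrix/shrinkage invariant for \fd-with-dumps (your $\bm{E}^{(\tau)}$ is exactly the paper's $\sum_t \bm{\Delta}_t = \bm{C}_\tau^\top\bm{C}_\tau - \hat{\bm{C}}_\tau^\top\hat{\bm{C}}_\tau$), the observation that $\hat{\bm{C}}^{(T-N)\top}\hat{\bm{C}}^{(T-N)}\preceq \theta\bm{I}$, and the energy argument bounding the queue at $O(1/\varepsilon)$ snapshots. The only differences are cosmetic: you bound the identical decomposition with two-sided PSD inequalities rather than a three-term triangle inequality, and you explicitly charge the query-time $\mathsf{FD}_\ell$ merge, which the paper's write-up absorbs into its slack.
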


The proof of Theorem~\ref{thm:norm-seq-sw-fd} can be found in~\ref{proof:norm-seq-sw-fd}.

\section{Sequence-based model}
\label{sec:seq-dsfd}

In this section, we delve into extending \dsfd to \seqdsfd to accommodate the general case where row vectors are not normalized, that is, \(\lVert \bm{a}_i \rVert_2^2 \in [1, R]\), aiming to tackle Problem~\ref{prob:seq-sw-fd}. \seqdsfd achieves the optimal space complexity of \(O\left({d\over \varepsilon} \log R\right)\) for Problem~\ref{prob:seq-sw-fd}, aligning with the problem's lower bound space complexity as established in Theorem~\ref{thm:seq-swfd-lower-bound}.

\subsection{High-Level Ideas}

In the unnormalized scenario of sequence-based matrix sketching, the error bounds \(\varepsilon \lVert \bm{A}_W\rVert_F^2\) fluctuate based on the distribution of the most recently arrived vectors \(\bm{a}\in W\) over different periods. The minimum error bound is \(\varepsilon N\) assuming that \(\lVert \bm{a}\rVert_2^2=1\) for any \(\bm{a}\in W\), whereas the maximum error bound reaches \(\varepsilon NR\), assuming \(\lVert \bm{a}\rVert_2^2=R\) for any \(\bm{a}\in W\).

Drawing inspiration from the extension process applied in transitioning from the \textsc{BasicCounting} problem to the \textsc{Sum} problem~\cite{datar2002maintaining}, we regard the arrival of a row vector \(\bm{a}_i\) with \(\lVert \bm{a}_i\rVert_2^2=v_i\) as analogous to the simultaneous arrival of \(v_i\) normalized vectors \(\bm{a}_i/\sqrt{v_i}\). Following this approach, we apply the same update procedure as utilized in the normalized \dsfd. As a result, the outputs produced by the normalized \dsfd are confined within the window size of the most recent \(\lVert \bm{A}_W\rVert_F^2=\sum_{i=t_\text{now}-N+1}^{t_\text{now}}v_i\) normalized row vectors.

\begin{figure}
    \centering
    \includegraphics[width=\linewidth]{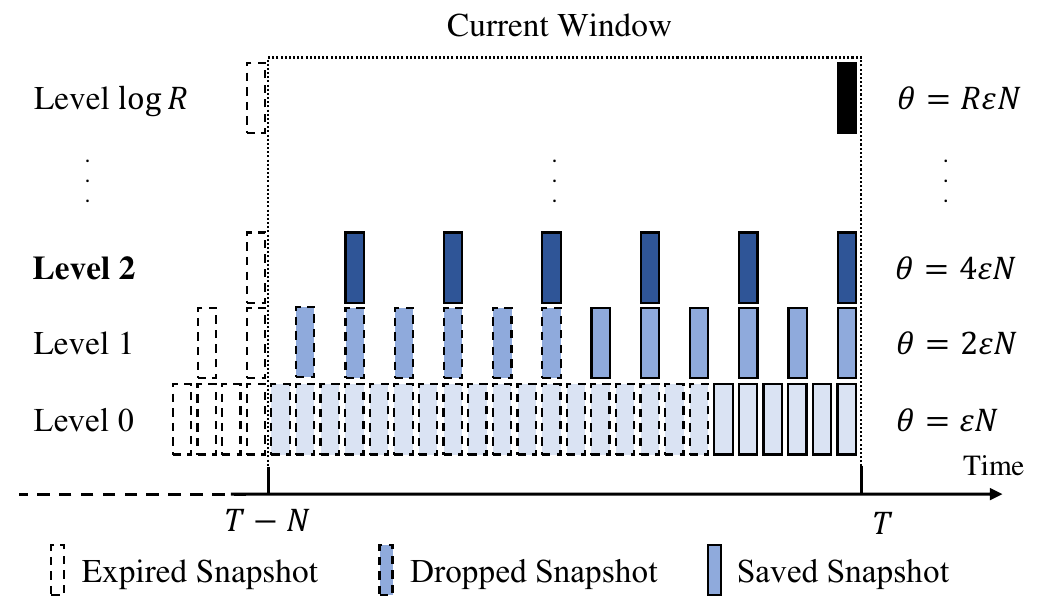}
    \caption{Sequence-based \dsfd. We maintain $L=\lceil\log R \rceil$ layers of \dsfd~ structures in parallel, each with different error bounds and dump thresholds $\theta=2^j\varepsilon N$ for the $j$-th level. In the visualization, depicted in dark blue, the norm of snapshots increases. For each level, we retain only the most recent $O\left({1\over \varepsilon}\right)$ snapshots saved in the queue and discard the older ones to limit the total memory usage to $O\left({1\over \varepsilon} \log R\right)$.}
    \label{fig:seq-dsfd}
\end{figure}

Given that the initial algorithm is limited to fixed-size sliding windows with a constant size of \(N\), extending normalized \dsfd to general \dsfd effectively broadens the algorithm's capability to maintain and provide answers for matrix covariance estimation over sliding windows of variable sizes. Inspired by the extended \(\lambda\)-snapshot scheme for the \textsc{FrequentItems} problem across variable-size sliding windows~\cite{lee2006simpler}, we introduce the \seqdsfd algorithm to address Problem~\ref{prob:seq-sw-fd}. Illustrated in Figure~\ref{fig:seq-dsfd}, our approach involves maintaining \(L=\lceil\log R \rceil\) layers of \dsfd structures concurrently, each configured with distinct error bounds and dump thresholds \(\theta=2^j\varepsilon N\) for the \(j\)-th level, where \(0\le j \le L\). Accordingly, the error bounds and dump thresholds extend from \(\varepsilon N, 2\varepsilon N,\dots, \varepsilon NR\) in an ascending sequence. The lower levels, characterized by smaller \(\theta\) values, perform dump operations more frequently.

When handling query requests for a sketch \(\bm{B}_W\) corresponding to the current window, we select the layer that meets the error bound criterion based on \(\varepsilon\lVert \bm{A}_W\rVert_F^2\) for the prevailing window. Notably, it is unnecessary to actively maintain \(\lVert \bm{A}_W\rVert_F^2\) or its approximation. Instead, we opt for the \dsfd layer that offers the minimal error bound while ensuring that the retained dumped snapshots span the time range of the current window size \(N\). This chosen layer then yields the appropriate sketch \(\bm{B}_W\).

\subsection{Algorithm descriptions. }

\begin{algorithm}[h]
	\caption{\seqdsfd: \textsc{Initialize}($d,\ell, N, R, \beta$)}
    \label{alg:seq-sw-fd-init} 
	\KwIn{
        $d$: Dimension of input vectors of \textsc{Update} \\
        $\ell=\min\left(\lceil\frac{1}{\varepsilon}\rceil, d\right)$: Number of rows in FD sketch \\
        $N$: Length of sliding window \\
        $R$: Upper bound of $\lVert \cdot \rVert_2^2$ for rows. For \textsc{Problem} \ref{prob:seq-sw-fd}, $R=\lVert\bm{A}\rVert_{2,\infty}^2$ \\
        $\beta$: Additional coefficient of error, default as 1.0 \\
    }
    $L\leftarrow \lceil \log_2 R \rceil$ \\
    $\mathcal{L}\leftarrow []$\\
    \For{$j\in [0,L]$}{
        $\mathcal{L}.\textsc{append}(\mathsf{Fast-DS-FD}(d,\ell, N, \theta=2^j\varepsilon N))$
    }
\end{algorithm}

\begin{algorithm}[h]
	\caption{\seqdsfd: \textsc{Update}($\bm{a}_i$)}
    \label{alg:seq-sw-fd-update} 
	\KwIn{$\bm{a}_i$: the row vector arriving at timestamp $i$\\}
 
    \For{$j \in [0,L]$}{
        \While{$len(\mathcal{L}[j].\mathcal{S}) > 2(1+\frac{4}{\beta})\frac{1}{\varepsilon}$ \textbf{or} $\mathcal{L}[j].\mathcal{S}[0].t+N\le i$}{
            $\mathcal{L}[j].\mathcal{S}$.\textsc{popleft}()\\
        }
        \If{$\lVert \bm{a}_i\rVert_2^2\ge 2^j \varepsilon N$}{
            $\mathcal{L}[j].\mathcal{S}$ appends $(\bm{v}=\bm{a}_i, s=\mathcal{L}[j].\mathcal{S}[-1].t+1, t=t_{\text{now}})$ \\
            $\mathcal{L}[j].\mathcal{S}^\prime$ appends  $(\bm{v}=\bm{a}_i, s=\mathcal{L}[j].\mathcal{S}^\prime[-1].t+1, t=t_{\text{now}})$ \\
        }\Else{
            $\mathcal{L}[j].\textsc{Update}(\bm{a}_i)$\\
        }
    }
\end{algorithm}

\header{\bf Data structures.} As outlined in Algorithm \ref{alg:seq-sw-fd-init}, we determine the number of levels to be \(L = \lceil \log_2 R\rceil\) (line 1). For each level \(i\), we initialize a \dsfd structure with a dump threshold \(\theta = 2^{i}\varepsilon N\) (lines 3-4). This setup specifies that the top row \(c_1\) of the FD sketch is dumped as a snapshot if \(\lVert c_1\rVert_2^2 \ge 2^{i}\varepsilon N\). Concurrently, to maintain memory efficiency, we cap the number of snapshots at each level to a maximum of $2(1+4/\beta)\frac{1}{\varepsilon}$, thereby constraining the memory requirement to \(O\left({d\over \varepsilon} \log R\right)\).

\header{\bf Update algorithm.} Algorithm \ref{alg:seq-sw-fd-update} details the procedure for updating a \seqdsfd sketch. Note that all incoming row vectors \(\bm{a}_i\) have norms \(\lVert \bm{a}_i\rVert_2^2\) within the range \([1,R]\). Initially, in line 3, we discard the oldest snapshot from the queue until the head element of the queue remains within the current sliding window and ensure the snapshot count does not surpass \(2(1+4/\beta)\frac{1}{\varepsilon}\). Subsequently, each level of \dsfd is updated with the input row vector \(\bm{a}_i\). If the norm of the input row vector \(\lVert \bm{a}_i\rVert_2^2\) is above the dump threshold \(\theta\) (line 4), we directly save it as a snapshot and add it to the queue of the corresponding level (lines 5 and 6), effectively achieving zero error for this row. If not, the \dsfd is updated with \(\bm{a}_i\) following the same procedures outlined in Algorithms~\ref{alg:norm-seq-sw-fd-update} or~\ref{alg:norm-seq-sw-fd-update2} (line 8).

For every incoming row \(\bm{a}_i\), Algorithm~\ref{alg:seq-sw-fd-update} processes the \fastdsfd sketch up to \(\log R\) times across all \(L=\log R\) layers, sum up to a total per-step update time of \(O\left(d\ell \log R \right)\).

In \seqdsfd, it remains essential to concurrently manage dual sets of sketches for each level and execute a \textit{restart every \(N\) step operation}, akin to the approach in \dsfd. Distinctively, for the primary \dsfd sketch at layer \(j\) within \seqdsfd, an alternating swap between the primary \dsfd sketch and the auxiliary \dsfd sketch occurs once the cumulative size of input vectors \(\sum \lVert \bm{a}_i \rVert_F^2\) processed by the primary \dsfd sketch surpasses \(2^{j+1} N\). This adaptation accounts for the perception of the arrival of a row vector \(\bm{a}_i\) with \(\lVert \bm{a}_i\rVert_2^2=v_i\) as equivalent to the simultaneous arrival of \(v_i/2^j\) rescaled vectors \(\sqrt{\frac{2^j}{v_i}}\bm{a}_i\). For clarity, this mechanism is not explicitly depicted in the pseudocode of Algorithms \ref{alg:seq-sw-fd-init} and \ref{alg:seq-sw-fd-update}.

\header{\bf Query algorithm.} Algorithm~\ref{alg:seq-sw-fd-query} details the procedure for generating a matrix sketch for the window \([t-N, t]\) utilizing a \seqdsfd sketch. Figure~\ref{fig:seq-dsfd} illustrates the possible state of the data structure at a specific moment.

Given that the queue of each layer saves \(O(\ell)\) snapshots, it is uncertain whether the snapshots preserved in the lower layers encompass the full window, as shown in levels 0 and 1 of Figure~\ref{fig:seq-dsfd}. Consequently, it is imperative to identify the lowest layer (offering minimal error) that contains snapshots spanning a window length of \(N\) (as exemplified by level 2 in Figure~\ref{fig:seq-dsfd}). A straightforward linear search could be conducted by verifying if the timestamp of the oldest snapshot in each layer's queue falls within the window interval \([t-N, t]\), which would entail a time complexity of \(O(\log R)\). However, considering that the timestamp of the leading snapshot in each layer's queue increases monotonically, a binary search method could be employed, effectively reducing the time complexity to \(O(\log \log R)\).

\begin{algorithm}[h]
	\caption{\seqdsfd: \textsc{Query}()}
    \label{alg:seq-sw-fd-query}
    Find the $\min_{j} 1\le \mathcal{L}[j].\mathcal{S}[0].s \le t_{\text{now}}-N+1$\\
    \Return $\mathsf{FD}_\ell (\bm{B}, \mathcal{L}[j].\hat{\bm{C}})$, where $\bm{B}$ is stacked by $s_i.\bm{v}$ for all $s_i\in \mathcal{L}[j].\mathcal{S}$
\end{algorithm}

\subsection{Analysis}

We present the following theorem about the error guarantee, space usage, and update cost of \seqdsfd.

\begin{theorem}
    \label{thm:seq-sw-fd}
    Assume that the data stream of vector is $\bm{A} = \left[\bm{a}_1, \bm{a}_2,...,\bm{a}_T\right]$ and $\forall{1\le i\le T}, \lVert \bm{a}_i \rVert_2^2 \in [1,R]$.
    Given the length $N$ of the sliding window and the relative error $\varepsilon$, the \seqdsfd~ algorithm returns a sketch matrix $\bm{B}_W$.
    If we set $\ell=\min\left(\lceil \frac{1}{\varepsilon}\rceil, d\right)$ and $\beta>0$, then we have:
    \begin{equation}
        \label{eq:error-of-seq-swfd}
        \textup{\textbf{cova-err}}(\bm{A}_W, \bm{B}_W)=\left\lVert \bm{A}_W^{\top}\bm{A}_W - \bm{B}_W^{\top}\bm{B}_W \right\rVert_2 \le \beta \varepsilon \lVert \bm{A}_W \rVert_F^2 ,
    \end{equation}
    where $\bm{A}_W=\bm{A}_{T-N,T} = 
    [\bm{a}_{T-N+1},\bm{a}_{T-N+2},\dots,\bm{a}_{T}]^\top$. Suppose the $\beta$ is constant and $u$ as the update time of each level; the \seqdsfd algorithm uses $O\left(\frac{d}{\varepsilon}\log R \right)$ space and processes an update in $O\left(u \log R \right)$ time.
\end{theorem}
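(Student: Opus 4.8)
The plan is to reduce the analysis of \seqdsfd to a level-by-level invocation of Theorem~\ref{thm:norm-seq-sw-fd}. Fix the query time $T$ and write $M_W = \lVert\bm{A}_W\rVert_F^2 \in [N, NR]$. For each level $j \in [0,L]$, observe (Algorithm~\ref{alg:seq-sw-fd-update}) that level $j$ runs the normalized \dsfd update with dump threshold $\theta_j = 2^j\varepsilon N$ on the substream of \emph{light} rows ($\lVert\bm{a}_i\rVert_2^2 < \theta_j$), while every \emph{heavy} row ($\lVert\bm{a}_i\rVert_2^2 \ge \theta_j$) is stored verbatim as a snapshot and hence contributes zero error. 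Treating a light row of squared norm $v_i$ as $v_i/2^j$ identical copies of $\sqrt{2^j/v_i}\,\bm{a}_i$ puts level $j$ exactly in the setting of Theorem~\ref{thm:norm-seq-sw-fd}, but with the rescaled ``window length'' equal to the light‑mass of the current real‑time window. Consequently, whenever level $j$'s retained snapshots together with its residual sketch $\hat{\bm{C}}$ jointly span the window $[T-N+1,T]$, the proof of Theorem~\ref{thm:norm-seq-sw-fd} yields, for absolute constants $c_1,c_2$,
\begin{equation*}
\lVert \bm{A}_W^\top\bm{A}_W - \bm{B}_W^\top\bm{B}_W\rVert_2 \;\le\; c_1\theta_j + c_2\varepsilon M_W ,
\end{equation*}
where the first term absorbs the straddling‑snapshot and ``restart every $N$ steps'' overhead and the second is the inherent \fd residual (bounded by $M_W/\ell \le \varepsilon M_W$). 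It therefore suffices to show that the level $j^\dagger$ selected by Algorithm~\ref{alg:seq-sw-fd-query} both spans the window and has $\theta_{j^\dagger}$ small enough that the right‑hand side is at most $\beta\varepsilon M_W$.

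\textbf{Choosing the level.} The ``restart every $2^{j+1}N$ units of mass'' rule (with a shadow auxiliary copy, exactly as in \dsfd) guarantees that the primary level‑$j$ sketch always holds the most recent $2^{j+1}N$ units of mass and has processed at most $2^{j+2}N$ units since its creation. Hence for every level $j$ with $2^{j+1}N \ge M_W$ the level already spans the window, so Algorithm~\ref{alg:seq-sw-fd-query} never needs to look above the target level $j_0 := \lceil\log_2(M_W/N)\rceil$, for which $\theta_{j_0} = \Theta(\varepsilon M_W)$; the work is to push $j^\dagger$ \emph{down} toward $j_0$. For any $j \le j_0$, each snapshot created at level $j$ within the last $N$ steps (a dumped row, or a stored heavy row) removes at least $\theta_j$ units of mass, and the total mass that can reach the primary sketch plus the heavy mass of the window is at most $2^{j+2}N + M_W \le 2^{j+2}N + 2^{j_0}N$; therefore the number of level‑$j$ snapshots inside the window is at most
\begin{equation*}
\frac{2^{j+2}N + 2^{j_0}N}{\theta_j} \;=\; \frac{2^{j+2}N + 2^{j_0}N}{2^j\varepsilon N} \;\le\; \frac{c_3}{\varepsilon}\bigl(1 + 2^{\,j_0-j}\bigr)
\end{equation*}
for an absolute constant $c_3$. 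Since the queue is trimmed to $K_\beta := 2(1+4/\beta)\tfrac{1}{\varepsilon}$ entries, a level $j$ fails to span the window only if this count exceeds $K_\beta$; comparing the two quantities shows that the smallest non‑over‑trimmed level $j^\dagger$ satisfies $2^{\,j_0-j^\dagger} = \Theta(1/\beta)$, i.e. $\theta_{j^\dagger} = O(\beta\varepsilon M_W)$. Plugging this into the displayed error bound and choosing the constants hidden in $K_\beta$ accordingly gives $\lVert \bm{A}_W^\top\bm{A}_W - \bm{B}_W^\top\bm{B}_W\rVert_2 \le \beta\varepsilon M_W$, which is~(\ref{eq:error-of-seq-swfd}); the exact coefficient $2(1+4/\beta)$ in the cap is precisely what is forced by matching the constants $c_1,c_2,c_3$ so that the final multiplier is $\beta$.

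\textbf{Space, time, and the main obstacle.} The space bound is then immediate: there are $L+1 = \lceil\log_2 R\rceil+1 = O(\log R)$ levels, and each level stores a constant number of \fastdsfd sketches of size $O(d\ell) = O(d/\varepsilon)$ and snapshot queues capped at $K_\beta = O(1/\varepsilon)$ vectors of $\mathbb{R}^d$, i.e. $O(d/\varepsilon)$ per level, for a total of $O(\tfrac{d}{\varepsilon}\log R)$ when $\beta$ is constant. For the update time, each \textsc{Update} call loops once over the $L+1$ levels, doing $O(1)$ amortized queue maintenance plus one underlying update of cost $u$ per level, hence $O(u\log R)$ per step. I expect the main obstacle to be the level‑selection accounting in the second paragraph: the restart schedule is measured in accumulated mass while snapshot expiry and the target window are measured in real time, so one must carefully bound the carryover mass held by the primary sketch at the window's start and turn it into a bound on the number of snapshots generated inside the real‑time window — and then propagate Theorem~\ref{thm:norm-seq-sw-fd}'s constants through the per‑level reduction so that the cap $K_\beta$ exactly realizes the target relative error $\beta$.
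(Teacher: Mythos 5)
Your overall strategy --- a per-level reduction to Theorem~\ref{thm:norm-seq-sw-fd}, counting the snapshots generated inside the window at a given level to argue that the cap $2(1+\frac{4}{\beta})\frac{1}{\varepsilon}$ never trims a snapshot that is still needed, and then reading the error off the threshold of the lowest level that spans the window --- is essentially the route the paper takes. The paper picks the integer $i$ with $\log_2 \frac{\beta \lVert \bm{A}_W\rVert_F^2}{4N}-1 < i\le \log_2 \frac{\beta \lVert \bm{A}_W\rVert_F^2}{4N}$, bounds that level's in-window snapshot count by the same mass-counting inequality you use (total dumped mass at most $\ell\theta_i+\lVert\bm{A}_W\rVert_F^2$, each snapshot costing at least $\theta_i$), deduces $k<2(1+\frac{4}{\beta})\frac{1}{\varepsilon}$, and concludes that the query selects some $j\le i$ whose error is $2^{j+2}\varepsilon N\le \beta\varepsilon\lVert\bm{A}_W\rVert_F^2$. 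The space and time accounting is identical.

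There is, however, one concrete flaw in your write-up: your per-level error bound $c_1\theta_j + c_2\varepsilon M_W$ carries an additive term $c_2\varepsilon M_W$ that does not scale with $\beta$. Since the theorem is claimed for every $\beta>0$, this bound cannot deliver $\beta\varepsilon M_W$ once $\beta<c_2$, and no tuning of the constants hidden in $K_\beta$ can repair it, because that term comes from the \fd residual rather than from the snapshot cap. The repair is that the residual is not $M_W/\ell$: level $j$ restarts once its primary sketch has absorbed $2^{j+1}N$ units of mass, so the mass processed since the last restart is $O(2^{j}N)=O(\theta_j/\varepsilon)$, and hence each of the three error sources in the proof of Theorem~\ref{thm:norm-seq-sw-fd} (the two \fd covariance errors and the leftover auxiliary sketch) is $O(2^{j}\varepsilon N)=O(\theta_j)$ at level $j$. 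This is exactly why the paper's per-level error is $2^{j+2}\varepsilon N=4\theta_j$ with no $\beta$-independent additive part, so that the whole bound shrinks with $\theta_{j^\dagger}=O(\beta\varepsilon M_W)$ as you intend. With that correction your argument coincides with the paper's proof.
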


The proof of Theorem~\ref{thm:seq-sw-fd} can be found in~\ref{proof:thm:seq-sw-fd}.

\section{Time-based model}
\label{sec:time-dsfd}

In our previous discussions, we focused on \textit{sequence-based data streams}, characterized by the regular arrival of data items, where the arrival time increments by one unit for each arrival. However, in many real-world applications of sliding window data streams, attention often shifts to sliding windows defined in real-time terms---such as maintaining a sketch of data items received over the past hour or day. This scenario is termed \textit{time-based data streams}.

Time-based data streams differ from sequence-based streams primarily in two aspects: (1) The presence of \textit{idle} periods, which denote intervals without any arriving items, or equivalently, when the incoming vector $\bm{a}_t$ is the zero vector. This characteristic introduces potential sparsity within the current time window, i.e., $\lVert \bm{A}_W\rVert_F^2 < N$ (assuming $\bm{a}_t\in \{0\} \cup [1,R]$). (2) The occurrence of \textit{bursty} phenomena, characterized by abrupt increases in the rates of data item arrivals.

Adapting \seqdsfd to accommodate the time-based model is straightforward. Given that the error bound may be less than $\varepsilon N$ when $\lVert \bm{A}_W\rVert_F^2 < N$, we adjust the number of parallel \dsfd layers to $L=\lceil \log_2(\varepsilon NR)\rceil$, as opposed to $L=\lceil \log_2(R)\rceil$ employed in \seqdsfd. The dump thresholds are set to $\theta=2^i$ ($1, 2, 4, \dots, \varepsilon NR$) for all layers $0 \le i \le L$, thereby determining the memory cost as $O\left({d\over \varepsilon} \log(\varepsilon NR)\right)$. The procedures for updates and queries remain consistent with those described in Algorithm~\ref{alg:seq-sw-fd-update} and Algorithm~\ref{alg:seq-sw-fd-query}.

Here, $NR$ may also represent the maximal potential value that $\lVert \bm{A}_W\rVert_F^2$ could achieve within a time-based sliding window. In instances where the arriving vectors are normalized, i.e., $R=1$ and $\lVert \bm{a}_t \rVert_2^2 \in \{0,1\}$, the sketch's number of layers is modified to $L=\lceil \log(\varepsilon N)\rceil$, with the total size being $O\left({d\over \varepsilon} \log(\varepsilon N)\right)$.

\begin{corollary} 
    \label{coro:time-sw-fd}
    Assume that the data stream of vectors is $\bm{A} = \left[\bm{a}_1, \bm{a}_2,...,\bm{a}_T\right]$, and $\lVert \bm{a}_i \rVert_2^2 \in \{0\}\cup [1,R]$ for $\forall{1\le i\le T}$. Given the length $N$ of the sliding window and the relative error $\varepsilon$, the Time-\dsfd algorithm returns a sketch matrix $\bm{B}_W$.
    If we set $\ell=\min\left(\lceil \frac{1}{\varepsilon}\rceil, d\right)$ and $\beta>0$, then we have:
    \begin{equation}
        \label{eq:error-of-time-swfd}
        \textup{\textbf{cova-error}}(\bm{A}_W, \bm{B}_W)=\left\lVert \bm{A}_W^{\top}\bm{A}_W - \bm{B}_W^{\top}\bm{B}_W \right\rVert_2 \le \beta \varepsilon \lVert \bm{A}_W \rVert_F^2,
    \end{equation}
    where $\bm{A}_W=\bm{A}_{T-N,T} = 
    [\bm{a}_{T-N+1},\bm{a}_{T-N+2},\dots,\bm{a}_{T}]^\top$. 
    Suppose the $\beta$ is constant and $u$ as the update time of each level; the Time-\dsfd algorithm uses $O\left(\frac{d}{\varepsilon}\log \varepsilon NR \right)$ space and processes an update in $O\left(u \log \varepsilon NR \right)$ time.
\end{corollary}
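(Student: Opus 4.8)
The plan is to reduce the time-based unnormalized model to the sequence-based unnormalized model already analyzed in Theorem~\ref{thm:seq-sw-fd}, treating idle steps (zero vectors) as updates that contribute nothing to any sketch or queue. First I would observe that feeding a zero vector $\bm{a}_t = \bm{0}$ into any \dsfd layer changes neither the FD sketch $\hat{\bm{C}}$ nor the covariance $\bm{K}$ nor the snapshot queue $\mathcal{S}$ — indeed $\mathsf{FD}_\ell(\hat{\bm{C}}, \bm{0}) = \hat{\bm{C}}$ since appending a zero row to the residual matrix leaves its SVD unchanged up to zero rows — so the only effect of an idle step is that time advances by one, which correctly triggers expiry of old snapshots via the condition $\mathcal{S}[0].t + N \le i$. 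Hence the state of the data structure after processing the time-based stream $\bm{a}_1,\dots,\bm{a}_T$ over window $N$ is identical to what a sequence-based \seqdsfd would produce on the subsequence of nonzero vectors, but with the window expiry governed by real timestamps rather than counts. This is exactly the ``variable-size window'' situation handled by the layered construction, so the error analysis of Theorem~\ref{thm:seq-sw-fd} applies verbatim once the range of achievable $\lVert \bm{A}_W \rVert_F^2$ is correctly bounded.

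Next I would pin down the number of layers. In the time-based model $\lVert \bm{A}_W \rVert_F^2 = \sum_{i=T-N+1}^{T} \lVert \bm{a}_i \rVert_2^2$ can be as small as $0$ (or as small as $1$ when at least one nonzero vector is present) and as large as $NR$ (all $N$ positions occupied by rows of squared norm $R$). The target error $\beta\varepsilon\lVert\bm{A}_W\rVert_F^2$ therefore ranges over $[\beta\varepsilon, \beta\varepsilon NR]$, so to always have a layer whose dump threshold $\theta = 2^j$ yields total error at most $\beta\varepsilon\lVert\bm{A}_W\rVert_F^2$ we need thresholds spanning $1, 2, 4, \dots, \varepsilon NR$, i.e. $L = \lceil \log_2(\varepsilon NR)\rceil + 1$ layers. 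The query procedure of Algorithm~\ref{alg:seq-sw-fd-query} then selects the lowest layer whose oldest retained snapshot has sequence index $s \le t_{\text{now}} - N + 1$; I would verify, reusing the capacity argument from the proof of Theorem~\ref{thm:seq-sw-fd}, that capping each queue at $2(1+4/\beta)/\varepsilon$ snapshots still leaves enough snapshots in the appropriate layer to cover the window, because on that layer the cumulative squared norm absorbed into snapshots between two consecutive dumps is at least $\theta = 2^j \ge \varepsilon\lVert\bm{A}_W\rVert_F^2 / 2$, bounding the number of snapshots needed to span the window by $O(1/\varepsilon)$.

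With those two ingredients the corollary follows: the per-layer error bound $4\theta$ inside the selected layer (from Theorem~\ref{thm:norm-seq-sw-fd}, rescaled) combined with the layer-selection criterion gives $\textbf{cova-error}(\bm{A}_W,\bm{B}_W) \le \beta\varepsilon\lVert\bm{A}_W\rVert_F^2$; the space is $L$ copies of an $O(d/\varepsilon)$ structure, i.e. $O\!\left(\frac{d}{\varepsilon}\log \varepsilon NR\right)$; and each \textsc{Update} touches all $L$ layers at amortized cost $u$ each, for $O(u\log \varepsilon NR)$ total, with the normalized-$R$ special case ($R=1$) collapsing to $L = \lceil\log(\varepsilon N)\rceil+1$ layers and the stated bounds. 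The main obstacle I anticipate is the second step — arguing that the queue-length cap does not destroy window coverage in the presence of both idle periods (which can make a low layer's snapshots arbitrarily old without any being dumped) and bursts (which can flood a layer with dumps); handling idle periods correctly is precisely why the ``restart every $N$ steps'' swap between primary and auxiliary sketches, triggered here by cumulative norm $2^{j+1}N$ rather than by a step count, must be carried over from \seqdsfd, and I would need to check that this rescaled restart condition interacts correctly with real-time expiry.
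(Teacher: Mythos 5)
Your proposal is correct and follows essentially the same route as the paper, which justifies this corollary only by the discussion in Section 5: extend the layered \seqdsfd construction so that the dump thresholds $\theta=2^j$ span $1,2,\dots,\varepsilon NR$ (hence $L=\lceil\log_2 \varepsilon NR\rceil$ levels), treat idle timestamps as zero-vector updates that leave every sketch unchanged while advancing expiry, and reuse the layer-selection and queue-capacity arguments from Theorem~\ref{thm:seq-sw-fd} verbatim. Your write-up is in fact somewhat more careful than the paper's, explicitly flagging the interaction between the cumulative-norm restart condition and real-time expiry, which the paper leaves implicit.
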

\begin{figure*}[t]
    \includegraphics[width=\textwidth]{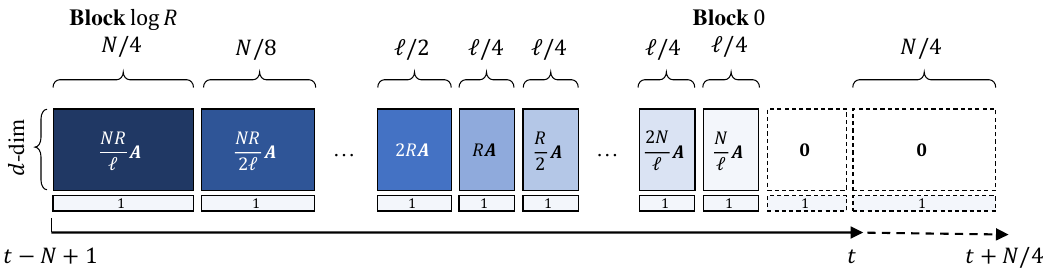}
    \caption{A constructive hard instance to establish a space lower bound for the sequence-based model. We initiate the sliding window's state by partitioning it into $\log R + 1$ blocks, each exponentially decreasing in size, and proceed to append one-hot vectors to the window over time. As each block expires, the algorithm is required to expend $\Omega\left( d \ell \right)$ bits to accurately estimate the expired block, according to Lemma~\ref{lem:fd-lower-bound}. Consequently, by considering the number of blocks $\log R + 1$, we derive the lower bound $\Omega\left(d \ell \log R\right)$. The rigorous proof is provided in the text for Theorem~\ref{thm:seq-swfd-lower-bound}. }
    \label{fig:seq-based-lower-bound}
\end{figure*}

\section{Space Lower bound}

In this section, we delve into the lower bounds of space requirements for any deterministic algorithm designed to address the problem of matrix sketching over sliding windows. Our analysis aims to demonstrate that the space complexities of our proposed algorithms align with these lower bounds. This alignment confirms the optimality of our algorithms in terms of memory requirements, showcasing their efficiency in handling the constraints imposed by sliding window contexts.

\begin{lem}
    \label{lem:fd-lower-bound}
    Let $\bm{B}$ be a $\ell \times d$ matrix approximating a $n \times d$ matrix $\bm{A}$ such that $\lVert \bm{A}^\top \bm{A} - \bm{B}^\top \bm{B}\rVert_2\le \lVert \bm{A} - \bm{A}_k\rVert_F^2/(\ell-k)$. For any algorithm with input as an $n\times d$ matrix $A$, the space complexity of representing $\bm{B}$ is $\Omega(d\ell)$ bits of space.
\end{lem}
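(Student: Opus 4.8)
The plan is to establish the lower bound via an information-theoretic / counting argument: exhibit a large family of $n \times d$ input matrices such that any two distinct members require distinct sketches $\bm{B}$, forcing the number of bits in any representation of $\bm{B}$ to be $\Omega(d\ell)$. Concretely, I would take $n = \ell$ (or a small multiple of $\ell$) and $k = 0$ for simplicity, so the approximation guarantee becomes $\lVert \bm{A}^\top \bm{A} - \bm{B}^\top \bm{B}\rVert_2 \le \lVert \bm{A}\rVert_F^2 / \ell$. First I would restrict attention to inputs of the form $\bm{A} = c\,\bm{S}$ where $\bm{S}$ has $\ell$ rows, each row drawn from a fixed code of unit or near-unit vectors in $\mathbb{R}^d$ that are pairwise well-separated (e.g.\ a $\delta$-separated set / spherical code, or rows with entries in $\{-1,+1\}/\sqrt d$ coming from a binary code with good minimum distance). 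The scaling $c$ is chosen so that $\lVert \bm{A}\rVert_F^2/\ell = c^2$ is small relative to the separation between the covariance matrices $\bm{A}^\top \bm{A}$ of distinct inputs.

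The key steps in order: (1) Fix a set $\mathcal{C}$ of candidate row vectors in $\mathbb{R}^d$ of size $2^{\Omega(d)}$ that is $\Omega(1)$-separated in the sense that $\lVert \bm{u}\bm{u}^\top - \bm{v}\bm{v}^\top\rVert_2 \ge \gamma$ for a constant $\gamma$ whenever $\bm{u}\neq\bm{v}$ in $\mathcal{C}$ — such a set exists by a standard volume/Gilbert–Varshamov argument. (2) Form the family of inputs $\{\bm{A}_1, \dots, \bm{A}_M\}$ by choosing, for each member, a different $\ell$-subset (or ordered tuple) of rows from $\mathcal{C}$; this gives $M = \binom{|\mathcal{C}|}{\ell} = 2^{\Omega(d\ell)}$ distinct inputs, and crucially distinct inputs differ in at least one row, so $\lVert \bm{A}_i^\top\bm{A}_i - \bm{A}_j^\top\bm{A}_j\rVert_2 \ge \gamma$ (after appropriate scaling, $\ge \gamma c^2$ with $c=1$, or one normalizes to make this a fixed constant). (3) Suppose an algorithm produces sketch $\bm{B}_i$ on input $\bm{A}_i$ with $\lVert \bm{A}_i^\top\bm{A}_i - \bm{B}_i^\top\bm{B}_i\rVert_2 \le \lVert\bm{A}_i\rVert_F^2/\ell = c^2 < \gamma c^2 / 2$ (tune the code so that $\gamma > 2/\ell \cdot$ something, or rescale); then by the triangle inequality the balls of radius $\lVert\bm{A}_i\rVert_F^2/\ell$ around each $\bm{A}_i^\top\bm{A}_i$ are disjoint, so the sketches $\bm{B}_i^\top\bm{B}_i$ — and hence the sketches $\bm{B}_i$ — must be pairwise distinct. (4) Since there are $M = 2^{\Omega(d\ell)}$ distinct sketches, any encoding of $\bm{B}$ uses $\log_2 M = \Omega(d\ell)$ bits.

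The main obstacle — and the step requiring the most care — is calibrating the constants so that the separation $\gamma$ of the code strictly exceeds twice the allowed error $\lVert\bm{A}\rVert_F^2/\ell$. The allowed error scales with $\lVert\bm{A}\rVert_F^2 = \ell c^2$ divided by $\ell$, i.e.\ it is exactly $c^2$ (the squared row norm), independent of $\ell$; meanwhile the covariance-separation $\lVert\bm{u}\bm{u}^\top - \bm{v}\bm{v}^\top\rVert_2$ for unit vectors $\bm{u},\bm{v}$ scaled by $c$ is $c^2\lVert\bm{u}\bm{u}^\top - \bm{v}\bm{v}^\top\rVert_2$, so both sides scale as $c^2$ and the $c$ cancels — the real requirement is that one can find exponentially many unit vectors whose rank-one projectors are pairwise separated by more than $2$ in spectral norm, which is \emph{false} for unit vectors (the separation is at most $2$). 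The fix, which I would implement, is to not put all $\ell$ rows in the "hard" code simultaneously: instead make only a single designated row vary over the exponential code while the other $\ell-1$ rows are a fixed orthonormal-ish block carrying most of the Frobenius mass, and argue recovery of that one row — no, better: scale the varying rows down. The cleanest route, and the one I would pursue, is to let each of the $\ell$ rows independently equal one of two fixed orthogonal unit vectors supported on its own coordinate block of size $d/\ell$ (so rows live in disjoint subspaces); then $\bm{A}^\top\bm{A}$ is block-diagonal, $\lVert\bm{A}\rVert_F^2 = \ell$, the allowed error is $1$, and flipping the choice in any one block changes $\bm{A}^\top\bm{A}$ by a spectral-norm amount of $\Theta(1)$ per block but one must aggregate: use instead rows with $\pm 1/\sqrt{d}$ entries from a binary code of minimum distance $\Omega(d)$ in each of $\ell$ disjoint blocks of width $d/\ell$, giving $2^{\Omega(d/\ell)}$ choices per block, $2^{\Omega(d)}$ total, and per-block covariance separation $\Omega(1/\ell \cdot \ell) = \Omega(1)$ in spectral norm after accounting for the block structure — at which point the counting gives $2^{\Omega(d)}$, not $2^{\Omega(d\ell)}$. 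Getting the $\Omega(d\ell)$ rather than $\Omega(d)$ is exactly the delicate point: one needs the $\ell$ rows to contribute \emph{independently} to the distinguishability while the error budget (which is the Frobenius norm over $\ell$) does not wash them out, and I expect the paper resolves this by using a communication-complexity reduction (an INDEX or augmented-indexing style argument over $\ell$ independent instances) rather than a pure packing bound, so that is the step I would develop most carefully.
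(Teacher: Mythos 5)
Your high-level strategy (a packing of hard inputs plus a pigeonhole/counting argument) is the right one, and you correctly diagnose the central obstacle: packing individual unit rows only yields $2^{\Omega(d)}$ codewords, and the per-row covariance separation is of the same order as the allowed error $\lVert\bm{A}\rVert_F^2/\ell$, so the rows' contributions get washed out. But your proposed resolutions (disjoint coordinate blocks, per-row binary codes) give only $2^{\Omega(d)}$ by your own admission, and your final guess --- that the bound must come from a communication-complexity reduction over $\ell$ independent INDEX instances --- is not how the argument goes. The missing idea is to pack \emph{subspaces}, not rows: take $\bm{A}$ to be a $\Theta(\ell)\times d$ matrix whose rows are a (scaled) orthonormal basis of a $\Theta(\ell)$-dimensional subspace of $\mathbb{R}^d$, so that $\bm{A}^\top\bm{A}$ is a rank-$\Theta(\ell)$ projection matrix. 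The Grassmannian of $\Theta(\ell)$-dimensional subspaces of $\mathbb{R}^d$ has dimension $\Theta(d\ell)$, so a volume argument produces a family $\mathcal{A}$ of $2^{\Omega(d\ell)}$ such projections with pairwise spectral separation $\lVert \bm{A}_i^\top\bm{A}_i-\bm{A}_j^\top\bm{A}_j\rVert_2>1/2$. With $k=0$ and rank $\ell/4$, say, the error budget is $\lVert\bm{A}\rVert_F^2/\ell = 1/4$, strictly less than half the separation (after a constant adjustment), so the triangle inequality forces all $2^{\Omega(d\ell)}$ sketches $\bm{B}_i^\top\bm{B}_i$ to be distinct, giving $\Omega(d\ell)$ bits. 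This is exactly the set $\mathcal{A}$ of $\ell/4$-dimensional projection matrices with cardinality $\Omega(2^{d\ell})$ that the paper invokes (from Ghashami et al.) in the proof of Theorem~\ref{thm:seq-swfd-lower-bound}; note the paper does not reprove Lemma~\ref{lem:fd-lower-bound} itself but cites it, and the Grassmannian packing is the single ingredient your proposal lacks.
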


Lemma~\ref{lem:fd-lower-bound}, as established by Ghashami et al.~\cite{ghashami2016frequent}, lays the groundwork for understanding the space efficiency of matrix sketching algorithms. Building upon this lemma, we aim to prove theorems concerning the space lower bounds for both sequence-based and time-based models of matrix sketching over sliding windows. These theorems ensure that our algorithms not only maintain precise sketching capabilities under the constraints of sliding window contexts but also adhere to the minimal possible space complexity.

\begin{theorem}[Seq-based Lower Bound]
    \label{thm:seq-swfd-lower-bound}
    A deterministic algorithm that returns $\bm{B}_W$ such that

\begin{equation*}
    \lVert \bm{A}_W^\top\bm{A}_W-\bm{B}_W^\top\bm{B}_W\rVert_2 \le \frac{\varepsilon}{3} \lVert \bm{A}_W\rVert_F^2,
\end{equation*}
    where $\varepsilon=1/\ell$, $\bm{A}_W\in \mathbb{R}^{N\times (d+1)}$, $N\ge \frac{1}{2\varepsilon}\log \frac{R}{\varepsilon}$ and $1\le \lVert \bm{a} \rVert_2^2 \le R+1$ for all $\bm{a} \in \bm{A}_W$ must use $\Omega\left(\frac{d}{\varepsilon}\log R \right)$ bits space.
\end{theorem}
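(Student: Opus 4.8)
The plan is to build an explicit hard instance for the sequence-based sliding-window problem, following the blueprint sketched in Figure~\ref{fig:seq-based-lower-bound} and combining the single-block lower bound of Lemma~\ref{lem:fd-lower-bound} with the multi-scale partitioning idea from the \textsc{BasicCounting}/\textsc{Sum} lower bounds of~\cite{datar2002maintaining}. First I would partition an initial window state into $L+1 = \lceil\log R\rceil + 1$ consecutive blocks $P_0, P_1, \dots, P_L$, where block $P_j$ consists of $\Theta\!\left(\frac{1}{\varepsilon}\right)$ rows each of squared norm $\Theta(2^j)$ (realized by scaling $d$-dimensional hard directions by $2^{j/2}$); thus $\lVert P_j \rVert_F^2 = \Theta(2^j/\varepsilon)$ and the total window length is $N = \Theta\!\left(\frac{1}{\varepsilon}\sum_{j} 1\right) = \Theta\!\left(\frac{L}{\varepsilon}\right) = \Theta\!\left(\frac{1}{\varepsilon}\log\frac{R}{\varepsilon}\right)$, matching the hypothesis $N \ge \frac{1}{2\varepsilon}\log\frac{R}{\varepsilon}$. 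The extra $(d+1)$-th coordinate is a padding/normalization trick (as in the theorem statement, $1 \le \lVert\bm a\rVert_2^2 \le R+1$) that lets each hard direction be shifted so all norms lie in the required range while preserving the covariance structure up to a controlled additive shift.

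Next I would run the adversary forward: after the initial state, feed in a long run of low-norm "filler" vectors (squared norm $1$, e.g.\ copies of $\bm e_{d+1}$) so that the blocks expire one at a time from oldest to newest. The key observation is that at the moment block $P_j$ is the oldest surviving block, the current window is dominated in Frobenius norm by $P_j$ together with all younger blocks, so $\lVert \bm A_W\rVert_F^2 = \Theta(2^j/\varepsilon)$, while the required additive covariance error is only $\frac{\varepsilon}{3}\lVert\bm A_W\rVert_F^2 = \Theta(2^j)$. But $P_j$ itself has tail energy $\lVert P_j - (P_j)_k\rVert_F^2 = \Theta(2^j(\ell-k))$ for the hard directions, so an error of $\Theta(2^j)$ on the part of the sketch attributable to $P_j$ is exactly at the $\lVert P_j - (P_j)_k\rVert_F^2/(\ell-k)$ scale of Lemma~\ref{lem:fd-lower-bound}. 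Therefore, to answer correctly at the epoch when $P_j$ expires, the algorithm's state must encode a valid FD-type sketch of $P_j$, which by Lemma~\ref{lem:fd-lower-bound} requires $\Omega(d\ell)$ bits. Since the epochs for different $j$ are disjoint in time and the blocks are on geometrically separated norm scales, a standard encoding/information-transfer argument shows the single stored state must simultaneously support all $L+1$ of these reconstructions, giving $\Omega(d\ell\cdot L) = \Omega\!\left(\frac{d}{\varepsilon}\log R\right)$ bits.

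The main obstacle I anticipate is isolating block $P_j$'s contribution to the covariance error from the contributions of the younger blocks and the filler — in other words, arguing that an adversary can force the algorithm to "pay" $\Omega(d\ell)$ bits for $P_j$ \emph{on top of} what it already pays for $P_{j-1}, \dots, P_0$, rather than amortizing. The clean way to handle this is to make the hard directions of different blocks mutually orthogonal (possible since we have $d$ dimensions and can reuse them across time because the blocks never coexist as "the oldest block" — though if we want strict simultaneity we'd instead use a direct-sum / communication-complexity framing where an adversary chooses which block to interrogate). Concretely I would set up an encoder who knows all $L+1$ hard matrices, runs the algorithm through the initial state, hands the memory contents to a decoder, and lets the decoder, for each $j$ in turn, append the appropriate filler prefix, query, and recover $P_j$ up to the Lemma~\ref{lem:fd-lower-bound} guarantee; the decoder recovers $\Omega(d\ell)$ bits per block and $\Omega(d\ell L)$ bits total, which must therefore have been present in memory. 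The remaining routine work is checking the constants: that $\frac{\varepsilon}{3}\lVert\bm A_W\rVert_F^2$ is small enough to still force the Lemma~\ref{lem:fd-lower-bound} reconstruction at each epoch (a factor-$3$ slack absorbs the younger-block and filler energy, since those sum to at most a constant times $2^j/\varepsilon$ and the tail energy of $P_j$ alone is $\Theta(2^j(\ell-k))$ with $\ell-k=\Theta(\ell)$), and that the norm bounds $1 \le \lVert\bm a\rVert_2^2 \le R+1$ hold after the $(d+1)$-coordinate padding. I would also note that the $N \ge \frac{1}{2\varepsilon}\log\frac{R}{\varepsilon}$ hypothesis is exactly what guarantees the window is large enough to hold all $L+1$ blocks plus the filler needed to expire them one by one.
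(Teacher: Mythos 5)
Your proposal follows essentially the same route as the paper: partition the initial window into $\Theta(\log R)$ blocks on geometrically separated norm scales built from the hard family $\mathcal{A}$ of Lemma~\ref{lem:fd-lower-bound}, expire them one at a time with one-hot filler in the $(d{+}1)$-th coordinate, isolate each block by differencing two window answers, and invoke Lemma~\ref{lem:fd-lower-bound} once per block. Your explicit encoder/decoder framing is a cleaner way to state the final counting step than the paper's arrangement-counting, and your normalization (rows of squared norm $\Theta(2^j)$) avoids the paper's row-count surgery for large $i$.

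However, one step of your accounting is wrong as stated. You claim that when $P_j$ is the oldest surviving block, $\lVert \bm{A}_W\rVert_F^2 = \Theta(2^j/\varepsilon)$ because "the younger-block and filler energy sum to at most a constant times $2^j/\varepsilon$." With your normalization the filler occupies $\Theta\left((L-j)/\varepsilon\right)$ slots of unit squared norm, so it contributes $\Theta\left((L-j)/\varepsilon\right)$ to $\lVert \bm{A}_W\rVert_F^2$; for small $j$ this dominates $2^j/\varepsilon$, the allowed error $\frac{\varepsilon}{3}\lVert\bm{A}_W\rVert_F^2$ becomes $\Theta(L)\gg 2^j$, and Lemma~\ref{lem:fd-lower-bound} can no longer be forced on $P_j$. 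The fix is either to discard the blocks with $j \lesssim \log L$ (you still retain $\Omega(\log R)$ usable blocks, so the asymptotic bound survives), or to renormalize as the paper does, giving block $i$ Frobenius norm $\Theta(2^i N)$ so that every block's energy is at least a constant fraction of the filler's $\Theta(N)$ — at the cost of having to inflate the row counts of the largest blocks to keep individual row norms below $R$, which is exactly where the paper's hypothesis $N \ge \frac{1}{2\varepsilon}\log\frac{R}{\varepsilon}$ is consumed.
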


\begin{proof}
    
We partition a window of size $N$ consisting of $(d+1)$-dimensional vectors into $\log R +2$ blocks, as illustrated in Figure~\ref{fig:seq-based-lower-bound}. The leftmost $\log R+1$ blocks are labeled as $\log R, \dots, 1,0$ from left to right, as depicted in Figure~\ref{fig:seq-based-lower-bound}. The construction of these blocks is as follows: (1) Choose $\log R + 1$ matrices of size $\frac{\ell}{4} \times d$ from a set of matrices $\mathcal{A}$, where $\mathcal{A}$ ensures that $\bm{A}_i^\top\bm{A}_i$ is an $\ell/4$ dimensional projection matrix and $\lVert\bm{A}_i^\top \bm{A}_i - \bm{A}_j^\top\bm{A}_j\rVert > 1/2$ for all $\bm{A}_i,\bm{A}_j \in \mathcal{A}$. Ghashami et al.~\cite{ghashami2016frequent} have demonstrated the existence of such a set $\mathcal{A}$ with cardinality $\Omega(2^{d\ell})$, making the total number of distinct arrangements $L= {\Omega(2^{d\ell})\choose{\log R + 1}}$. Consequently, $\log L = \Omega(d\ell \log R)$. (2) For block $i$, multiply the chosen $\bm{A}_i \in \mathbb{R}^{\frac{\ell}{4} \times d}$ by a scalar of $\sqrt{\frac{2^i N}{\ell}}$, making the square of the Frobenius norm of block $i$, $\lVert\bm{A}_i\rVert_F^2$, equal to $2^i N/4$. (3) For block $i$ where $i > \log \frac{\ell R}{N}$, increase the number of rows from $\ell/4$ to $\frac{\ell}{4} \cdot 2^{i-\log\frac{\ell R}{N}}$ to ensure that $1\le \lVert \bm{a} \rVert_2^2 \le R$. The total number of rows is bounded by $N$, thus $\frac{N}{2}+\frac{\ell}{4}\log\frac{\ell R}{2N}\le N$, which implies $N \ge \frac{\ell}{2}\log \ell R$. (4) Set all the $(d+1)$-dimensional vectors in the window to be $1$.

We assume the algorithm is presented with one of these $L$ arrangements of length $N$, followed by a sequence of all one-hot vectors of length $N$ with only the $(d+1)$-dimension set as 1. We denote $\bm{A}_W^i$ as the matrix over the sliding window of length $N$ at the moment when $i+1, i+2,\dots, \log R$ blocks have expired. Suppose our sliding window algorithm provides estimations $\bm{B}_W^i$ and $\bm{B}_W^{i-1}$ of $\bm{A}_W^i$ and $\bm{A}_W^{i-1}$, respectively, with a relative error of $\frac{1}{3\ell}$. This implies

\begin{gather*}
    \lVert {\bm{A}_W^i}^\top\bm{A}_W^i -{\bm{B}_W^i}^\top {\bm{B}_W^i}\rVert_2\le \frac{1}{3\ell} \lVert \bm{A}_W^i\rVert_F^2 = \frac{1}{3\ell}\left( \frac{N}{4} \cdot 2^{i+1} + \frac{3N}{4}\right),\\
    \lVert {\bm{A}_W^{i-1}}^\top\bm{A}_W^{i-1} -{\bm{B}_W^{i-1}}^\top {\bm{B}_W^{i-1}}\rVert_2\le \frac{1}{3\ell} \lVert \bm{A}_W^{i-1}\rVert_F^2 = \frac{1}{3\ell}\left( \frac{N}{4} \cdot 2^{i} + \frac{3N}{4} \right).
\end{gather*}

Then we can answer the block $i$ with $\bm{B}_i^\top \bm{B}_i = {\bm{B}_W^i}^\top {\bm{B}_W^i} - {\bm{B}_W^{i-1}}^\top {\bm{B}_W^{i-1}}$ as below,

\begin{equation*}
    \label{eq:leftest-block}
    \begin{split}
        &\lVert \bm{A}_i^\top \bm{A}_i -\bm{B}_i^\top \bm{B}_i\rVert_2\\
        =&\lVert ({\bm{A}_W^i}^\top{\bm{A}_W^i} -{\bm{A}_W^{i-1}}^\top{\bm{A}_W^{i-1}}) -({\bm{B}_W^i}^\top {\bm{B}_W^i}  -{\bm{B}_W^{i-1}}^\top {\bm{B}_W^{i-1}})\rVert_2\\
        \le & \lVert {\bm{A}_W^i}^\top{\bm{A}_W^i} -{\bm{B}_W^i}^\top {\bm{B}_W^i}\rVert_2 +\lVert {\bm{A}_W^{i-1}}^\top{\bm{A}_W^{i-1}}  -{\bm{B}_W^{i-1}}^\top {\bm{B}_W^{i-1}}\rVert_2\\
        \le & \frac{1}{3\ell}\left(\frac{3}{4}N\cdot 2^i + \frac{3}{2}N \right) \le \frac{1}{\ell} \lVert \bm{A}_i\rVert_F^2.
    \end{split}
\end{equation*}

The algorithm is capable of estimating the blocks for levels $0 \le i \le \log_2 R$. According to Lemma~\ref{lem:fd-lower-bound}, each estimation of the blocks' levels by the algorithm necessitates $\Omega(d\ell)$ bits of space. Thus, we derive that a fundamental lower bound for the space complexity of any deterministic algorithm addressing the problem of matrix sketching over sliding windows is $\Omega\left(\frac{d}{\varepsilon} \log R \right)$.
\end{proof}

We also derive the space lower bound in the time-based model:

\begin{theorem}[Time-based Lower Bound]
    \label{thm:time-swfd-lower-bound}
    Any deterministic algorithm that returns $\bm{B}_W$ such that

\begin{displaymath}
    \lVert \bm{A}_W^\top\bm{A}_W-\bm{B}_W^\top\bm{B}_W\rVert_2 \le \frac{\varepsilon}{3} \lVert \bm{A}_W\rVert_F^2,
\end{displaymath}
    where $\bm{A}_W\in \mathbb{R}^{N\times d}$, $N\ge \frac{1}{2\varepsilon}\log \frac{R}{2}$ and $\lVert \bm{a} \rVert_2^2\in {0}\cup [1,R]$ for all $\bm{a} \in \bm{A}_W$, one of the space lower bound is $\Omega\left(\frac{d}{\varepsilon}\log \varepsilon NR \right)$.
\end{theorem}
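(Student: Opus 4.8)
The plan is to mimic the construction used in Theorem~\ref{thm:seq-swfd-lower-bound}, but to exploit the extra freedom that time-based streams allow \emph{idle} timestamps (zero vectors). In the sequence-based case, the number of rows per block was forced to grow geometrically so that $\lVert\bm{a}\rVert_2^2$ stays in $[1,R]$, and the window size $N$ absorbed a factor $\log R$ as a consequence; here, instead, each hard block will occupy a geometrically growing number of \emph{timestamps} rather than rows, padding the extra timestamps with zero vectors. This lets the Frobenius mass of the oldest block be as large as $\Theta(\varepsilon N R)$ worth of unit-norm rows while the block itself still contains only $\Theta(\ell)$ genuine rows, so that the total row count stays $O(\ell\log(\varepsilon N R))$ and fits inside a window of $N$ timestamps provided $N\ge\frac{1}{2\varepsilon}\log\frac{R}{2}$.

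Concretely, I would: \textbf{(1)} invoke Ghashami et al.'s packing set $\mathcal{A}$ of $\frac{\ell}{4}\times d$ matrices with $\bm{A}_i^\top\bm{A}_i$ a rank-$\ell/4$ projection and pairwise $\lVert\bm{A}_i^\top\bm{A}_i-\bm{A}_j^\top\bm{A}_j\rVert_2>1/2$, with $|\mathcal{A}|=\Omega(2^{d\ell})$, so that choosing one matrix for each of $L'=\lceil\log_2(\varepsilon N R)\rceil+1$ blocks gives $\log\binom{\Omega(2^{d\ell})}{L'}=\Omega(d\ell\log(\varepsilon N R))$ distinguishable arrangements. \textbf{(2)} Scale block $i$ by $\sqrt{2^i/\ell}$ (or the appropriate factor) so $\lVert\bm{A}_i\rVert_F^2=\Theta(2^i)$, and — where $2^i$ exceeds what $\ell/4$ unit rows can carry — replace the scaling by row-replication as before so that every actual row has squared norm in $[1,R]$; the remaining timestamps of the block are filled with the zero vector, which is legal since $\lVert\bm{a}\rVert_2^2=0$ is permitted. \textbf{(3)} Feed the adversary one of these arrangements followed by $N$ unit vectors in a fresh $(d{+}1)$-st coordinate (the ``filler'' direction orthogonal to everything), and consider the sequence of windows $\bm{A}_W^i$ as blocks $i{+}1,\dots$ expire. \textbf{(4)} Use the triangle inequality exactly as in the seq-based proof: subtracting two consecutive sketches $\bm{B}_W^i,\bm{B}_W^{i-1}$ recovers an estimate $\bm{B}_i^\top\bm{B}_i$ of $\bm{A}_i^\top\bm{A}_i$ with covariance error at most $\frac{1}{3\ell}(\lVert\bm{A}_W^i\rVert_F^2+\lVert\bm{A}_W^{i-1}\rVert_F^2)\le \frac{1}{\ell}\lVert\bm{A}_i\rVert_F^2$, because $\lVert\bm{A}_W^{i-1}\rVert_F^2$ is at most a constant factor larger than $\lVert\bm{A}_i\rVert_F^2$ by the geometric scaling. \textbf{(5)} Apply Lemma~\ref{lem:fd-lower-bound} to each of the $L'=\Omega(\log(\varepsilon N R))$ block estimates to conclude $\Omega(d\ell)$ bits per block, hence $\Omega(d\ell\log(\varepsilon N R))=\Omega(\frac{d}{\varepsilon}\log\varepsilon N R)$ bits in total.

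The main obstacle I expect is the bookkeeping in step~(2)–(3): making the three competing constraints simultaneously consistent — (a) every nonzero row has squared norm in $[1,R]$, (b) the Frobenius masses of the blocks form a geometric sequence up to $\Theta(\varepsilon N R)$ so that consecutive blocks dominate each other by only a constant, and (c) the \emph{total number of timestamps}, counting the zero-padded idle slots plus the $N$ trailing unit vectors, does not exceed the window length $N$, which is exactly where the hypothesis $N\ge\frac{1}{2\varepsilon}\log\frac{R}{2}$ gets used. One has to check that the geometric blow-up of block sizes sums to $O(N)$ rather than something larger, and that the ``padding with zeros'' trick does not inflate $\lVert\bm{A}_W^i\rVert_F^2$ in a way that breaks the $\le\frac{1}{\ell}\lVert\bm{A}_i\rVert_F^2$ bound in step~(4). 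A secondary subtlety is ensuring that, when row-replication is used for the high-index blocks, the packing separation $\lVert\bm{A}_i^\top\bm{A}_i-\bm{A}_j^\top\bm{A}_j\rVert_2>1/2$ scales correctly so that Lemma~\ref{lem:fd-lower-bound} still applies to the recovered $\bm{B}_i$; this should follow since replication multiplies $\bm{A}_i^\top\bm{A}_i$ by a known integer and the lemma's hypothesis is about the normalized-per-block error, which we have arranged to be $\le\frac{1}{\ell}\lVert\bm{A}_i\rVert_F^2$ regardless.
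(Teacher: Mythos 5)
Your skeleton matches the paper's: the same packing set $\mathcal{A}$, geometrically scaled blocks with row replication to keep norms in $[1,R]$, the telescoping/triangle-inequality recovery of each block from two consecutive window sketches, and Lemma~\ref{lem:fd-lower-bound} applied per block. You also correctly identify that the extra $\log(\varepsilon N)$ factor must come from exploiting idle (zero) timestamps. However, step~(3) contains a genuine error that breaks the argument for exactly the blocks that produce that extra factor: you propose to follow the arrangement with $N$ \emph{unit} vectors in a fresh $(d{+}1)$-st coordinate, as in the sequence-based proof. In the time-based construction the small blocks have Frobenius mass as low as $\Theta(\ell)$ (e.g.\ $\lVert\bm{A}_0\rVert_F^2=\ell/4$), so by the time only block $0$ survives, unit-vector filler makes $\lVert\bm{A}_W^0\rVert_F^2=\Theta(N)$ and the permitted error $\frac{\varepsilon}{3}\lVert\bm{A}_W^0\rVert_F^2=\Theta(\varepsilon N)$ vastly exceeds $\frac{1}{\ell}\lVert\bm{A}_0\rVert_F^2=\Theta(1)$ whenever $N\gg\ell$. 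The hypothesis of Lemma~\ref{lem:fd-lower-bound} then fails for every block with $2^i\ell\ll N$, i.e.\ for $\Theta(\log\varepsilon N)$ of the blocks, and the surviving blocks only yield $\Omega(d\ell\log R)$ — the sequence-based bound, not the claimed $\Omega(d\ell\log\varepsilon NR)$. Your own "constant factor" justification in step~(4) ("$\lVert\bm{A}_W^{i-1}\rVert_F^2$ is at most a constant factor larger than $\lVert\bm{A}_i\rVert_F^2$") is false under unit filler.

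The fix is what the paper does: the trailing filler must itself consist of \emph{zero} vectors (idle timestamps), which is legal in the time-based model and contributes nothing to $\lVert\bm{A}_W^i\rVert_F^2$. Then $\lVert\bm{A}_W^i\rVert_F^2=\sum_{j\le i}\lVert\bm{A}_j\rVert_F^2<2\lVert\bm{A}_i\rVert_F^2$ by the geometric scaling, so $\frac{1}{3\ell}\bigl(\lVert\bm{A}_W^i\rVert_F^2+\lVert\bm{A}_W^{i-1}\rVert_F^2\bigr)\le\frac{1}{\ell}\lVert\bm{A}_i\rVert_F^2$ holds for \emph{every} block, and all $\Theta(\log\varepsilon NR)$ blocks contribute $\Omega(d\ell)$ bits. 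Note also that your stated worry in the final paragraph is misdirected: zero padding cannot inflate any Frobenius norm; the inflation comes entirely from the unit-vector filler you introduced. (A minor additional point: the theorem is stated for $d$-dimensional rows, so introducing a $(d{+}1)$-st coordinate is unnecessary once the filler is the zero vector.)
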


The proof of Theorem~\ref{thm:time-swfd-lower-bound} is similar to that of Theorem~\ref{thm:seq-swfd-lower-bound} and can be found in~\ref{proof:time-swfd-lower-bound}.

Gathering Theorems \ref{thm:norm-seq-sw-fd}, \ref{thm:seq-sw-fd}, \ref{thm:seq-swfd-lower-bound}, and \ref{thm:time-swfd-lower-bound}, Corollary \ref{coro:time-sw-fd}, Lemma \ref{lem:fd-lower-bound}, we synthesize the space complexity results of the optimal algorithms we've proposed against the proven space lower bounds for any deterministic algorithm under four distinct models in the last two rows of Table~\ref{tab:alg}. 
\section{Experiment}

\subsection{Experiment Setup}

\header{\bf Datasets. } We conduct our experiments on both sequence-based and time-based models, utilizing a combination of synthetic and real-world datasets. For the sequence-based model, our experiments encompass one synthetic dataset and two real-world datasets. The characteristics and sources of these datasets are detailed below and summarized in Table~\ref{tab:seq-dataset}:

\begin{itemize}[leftmargin= * ]
    \item \textbf{SYNTHETIC:} This dataset is a Random Noisy matrix commonly used to evaluate matrix sketching algorithms, generated by the formula $\bm{A} = \bm{SDU} + \bm{N}/\zeta$. Here, $\bm{S}$ is a $n \times d$ matrix of signal coefficients, with each entry drawn from a standard normal distribution. $\bm{D}$ is a diagonal matrix with $\bm{D}_{i, i} = 1-(i-1)/d$. $\bm{U}$ represents the signal row space, satisfying $\bm{UU}^\top = \bm{I}_d$. The matrix $\bm{N}$ adds Gaussian noise, with $\bm{N}_{i,j}$ drawn from $\mathcal{N}(0, 1)$. We set $\zeta = 10$ to ensure the signal $\bm{SDU}$ is recoverable. The window size is set to $N=100,000$ for the SYNTHETIC dataset.
    
    \item \textbf{BIBD:}\footnote{\href{https://www.cise.ufl.edu/research/sparse/matrices/JGD_BIBD/bibd_22_8.html}{University of Florida Sparse Matrix Collection}} This dataset is the incidence matrix of a Balanced Incomplete Block Design by Mark Giesbrecht from the University of Waterloo. It consists of 231 columns, 319,770 rows, and 8,953,560 non-zero entries, with each entry being an integer (0 or 1) indicating the presence or absence of an edge. The window size is set to $N=10,000$ for the BIBD dataset.
    
    \item \textbf{PAMAP2 Physical Activity Monitoring:}\footnote{\href{https://archive.ics.uci.edu/dataset/231/pamap2+physical+activity+monitoring}{UCI Machine Learning Repository}} This dataset contains data from 18 different physical activities performed by 9 subjects wearing inertial measurement units and a heart rate monitor. For our experiments, we use data from subject 3, which includes 252,832 rows and 52 columns (timestamps and activity IDs removed, all missing entries set as 1). The window size is set to $N=10,000$ for the PAMAP2 dataset.
\end{itemize}

\begin{table}[h]
\caption{Datasets for the sequence-based window.}
\label{tab:seq-dataset}
\begin{tabular}{|l|r|r|r|r|}
\hline
Data Set  & Total Rows $n$ & $d$   & $N$     & Ratio $R$ \\ \hline
SYNTHETIC & $500,000$ & $300$ & $100,000$ & $14.75$   \\ \hline
BIBD      & $319,770$ & $231$ & $10,000$ & $1$       \\ \hline
PAMAP2    & $252,832$ & $52$  & $10,000$ & $1,403$        \\ \hline
\end{tabular}
\end{table}

We also evaluate the algorithms over the time-based model on two real-world datasets:

\begin{itemize}[leftmargin= * ]
    \item \textbf{RAIL}\footnote{\href{https://www.cise.ufl.edu/research/sparse/matrices/Mittelmann/rail2586.html}{University of Florida Sparse Matrix Collection}} dataset is the crew scheduling matrix for the Italian railways, where the entry at row $i$ and column $j$ denotes the integer cost for assigning crew $i$ to cover trip $j$. For our experiments, we selected a $200,000\times 500$ submatrix. Synthetic timestamps for RAIL were generated following the \textit{Poisson Arrival Process} with $\lambda=0.5$, setting the window size to $50,000$, which results in approximately $100,000$ rows on average per window.
    
    \item \textbf{YearPredictionMSD (YEAR)}\footnote{\href{https://archive.ics.uci.edu/dataset/203/yearpredictionmsd}{UCI Machine Learning Repository}} is a subset from the “Million Songs Dataset” \cite{bertin2011million} that includes the prediction of the release year of songs based on their audio features. It comprises over $500,000$ rows and $d = 90$ columns. For our analysis, a subset with $N = 200,000$ rows was utilized. This matrix exhibits a high rank. Synthetic timestamps for YEAR were similarly generated following the \textit{Poisson Arrival Process} with $\lambda=0.5$.
\end{itemize}

\begin{table}[h]
\caption{Datasets for the time-based window.}
\label{tab:time-dataset}
\begin{tabular}{|l|r|r|r|r|r|}
\hline
Data Set  & Total Rows $n$ & $d$   & $\Delta$  & $N_W$ & Ratio $R$ \\ \hline
RAIL     & $200,000$ & $500$  & $50,000$ &  $\approx 100,000$ & $12$ \\ \hline
YEAR     & $200,000$ & $90$  & $50,000$ &  $\approx 100,000$ & $1,321$ \\ \hline
\end{tabular}
\end{table}

\header{\bf Algorithms and parameters. }We compare \dsfd algorithm with three leading baseline competitors: Sampling algorithms, \lmfd, and \difd~\cite{wei2016matrix}. The evaluations of Sampling algorithms included \textsf{SWR} (row sampling with replacement) and \textsf{SWOR} (row sampling without replacement), where, for both strategies, sampling $\ell=O\left(d/\varepsilon^2 \right)$ rows is required to attain an $\varepsilon$-covariance error.

The \lmfd algorithm is tested for sequence-based and time-based window models. For \lmfd, the space and error metrics were adjusted by a parameter $b = 1/\varepsilon$, the number of blocks per level, and a parameter $\ell=\min\left(1/\varepsilon, d\right)$, the size of the approximation matrix $\bm{B}$ (i.e., the number of rows in $\bm{B}$), to achieve an $8\varepsilon$ relative covariance error. The evaluation of \difd is only conducted to the sequence-based sliding window model, with the space and error parameters determined by $L = \log (R/\varepsilon)$, the maximal number of levels within the dyadic interval framework.

Furthermore, both Seq-\dsfd and Time-\dsfd are tested across sequence-based and time-based windows. Similarly to \difd, it is necessary to estimate the maximal value $R$ of the row vector norm to set the number of layers as $L = \log R$ (for sequence-based) or $L=\log (\varepsilon N R)$ (for time-based). The parameters $\beta$ and the size of FD sketches $\ell$ are adjustable, balancing space complexity and the covariance relative error boundary.
    
\header{\bf Metrics.} In our experimental study, we adjust the parameters for each algorithm to illustrate the trade-offs between the \textit{maximum sketch size} and the observe \textit{maximum} and \textit{average error}.

\begin{itemize}[leftmargin= * ]
    \item \textbf{Sketch Size:} This metric denotes the space the matrix sketching algorithm occupies within the current window at a specific time. Considering that the primary part of the space cost comes from the row vectors of dimension $d$, we use the maximum number of rows to describe the space overhead for matrix sketching algorithms across different datasets.
    \item \textbf{Maximum and Average Relative Errors:} These metrics are employed to assess the quality of matrix estimates for various sketch algorithms. The relative error is defined as $\lVert \bm{A}_W^\top \bm{A}_W - \bm{B}_W^\top \bm{B}_W\rVert_2/\lVert \bm{A}_W\rVert_F^2$, where $\bm{A}_W$ represents the accurate matrix within the current sliding window, and $\bm{B}_W$ is the estimated matrix produced by the sketching algorithm.
\end{itemize}

\header{\bf Hardware.} For probabilistic algorithms such as random sampling, we employ the same random seed to guarantee the reproducibility of our experiments. All algorithms are implemented in Python 3.12.0. Experiments are conducted on a single idle core of an Intel\textregistered~ Xeon\textregistered~ CPU E7-4809 v4, clocked at 2.10 GHz.

\subsection{Experiment Results}

\begin{figure}[h]
    \centering
    \begin{subfigure}[b]{0.48\linewidth}
        \centering
        \includegraphics[width=\textwidth]{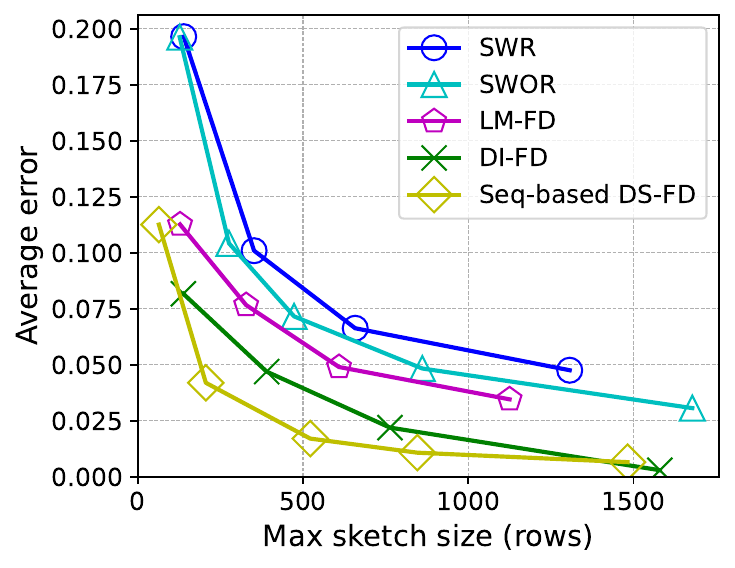}
        \caption{Average err vs. sketch size}
        \label{fig:synthetic-avg-err}
    \end{subfigure}
    \hfill
    \begin{subfigure}[b]{0.48\linewidth}
        \centering
        \includegraphics[width=\textwidth]{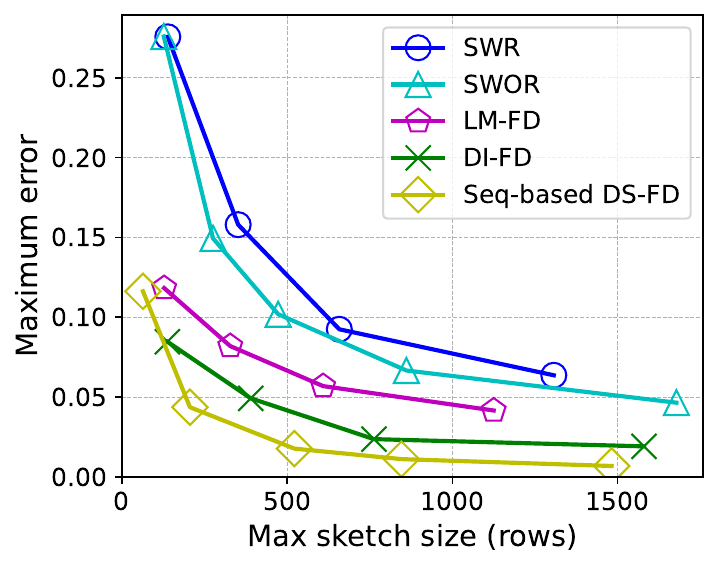}
        \caption{Maximum err vs. sketch size}
        \label{fig:synthetic-max-err}
    \end{subfigure}
    \caption{Error vs. sketch size on SYNTHETIC dataset.}
    \label{fig:exp-synthetic}
\end{figure}

\begin{figure}[h]
    \centering
    \begin{subfigure}[b]{0.48\linewidth}
        \centering
        \includegraphics[width=\textwidth]{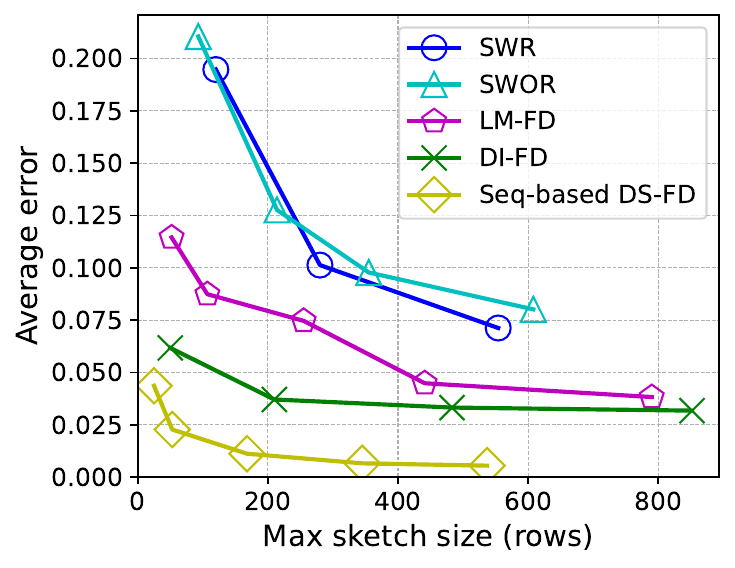}
        \caption{Average err vs. sketch size}
        \label{fig:bibd-avg-err}
    \end{subfigure}
    \hfill
    \begin{subfigure}[b]{0.48\linewidth}
        \centering
        \includegraphics[width=\textwidth]{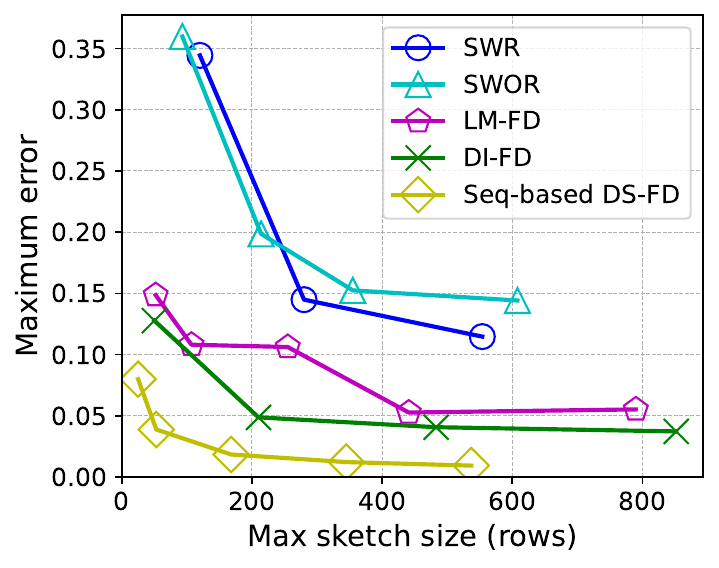}
        \caption{Maximum err vs. sketch size}
        \label{fig:bibd-max-err}
    \end{subfigure}
    \caption{Error vs. sketch size on BIBD dataset.}
    \label{fig:exp-bibd}
\end{figure}

\begin{figure}[h]
    \centering
    \begin{subfigure}[b]{0.48\linewidth}
        \centering
        \includegraphics[width=\textwidth]{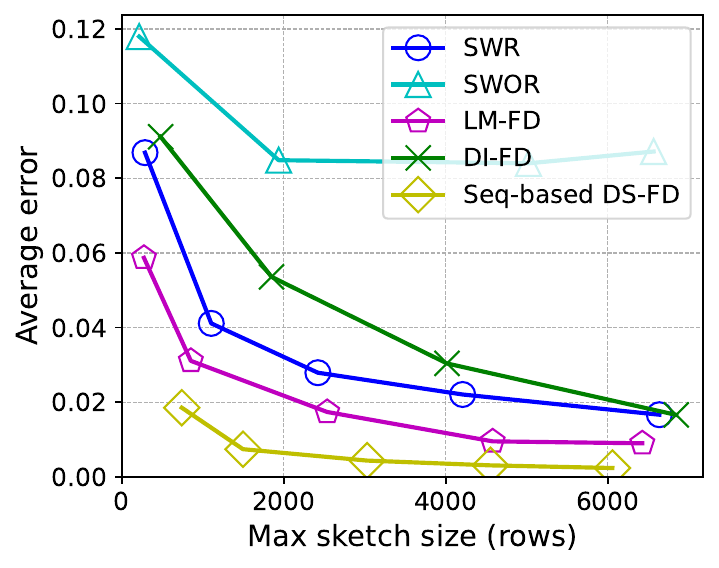}
        \caption{Average err vs. sketch size}
        \label{fig:pamap-avg-err}
    \end{subfigure}
    \hfill
    \begin{subfigure}[b]{0.48\linewidth}
        \centering
        \includegraphics[width=\textwidth]{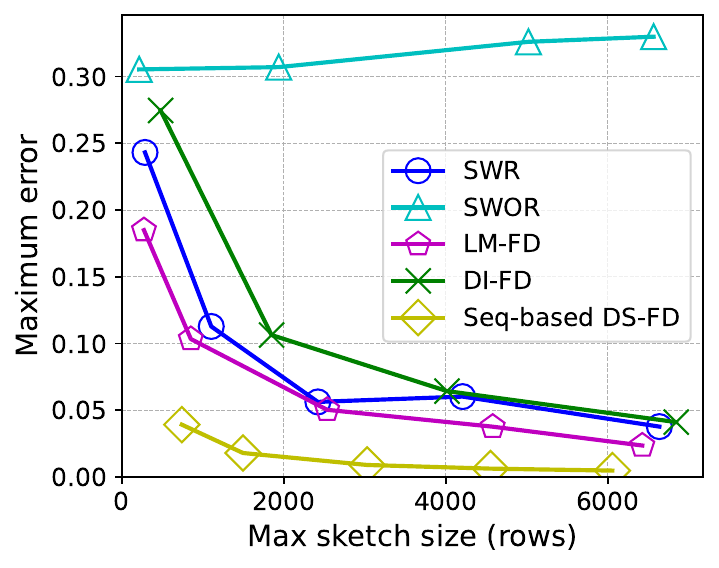}
        \caption{Maximum err vs. sketch size}
        \label{fig:pamap-max-err}
    \end{subfigure}
    \caption{Error vs. sketch size on PAMAP2 dataset.}
    \label{fig:exp-pamap}
\end{figure}

\header{\bf Errors vs. memory cost. } We begin our evaluation by comparing the empirical relative covariance error and memory cost across all algorithms. For each method, we adjusted a series of parameters to modulate the theoretical upper bound of the covariance error. We report the maximum sketch sizes, along with the average and maximum empirical relative covariance error across all queries. The evaluation was conducted on the three datasets for the sequence-based sliding window model. Figures~\ref{fig:exp-synthetic},~\ref{fig:exp-bibd}, and~\ref{fig:exp-pamap} illustrate the trade-offs between maximum sketch size and average error, and between maximum sketch size and maximum error, respectively. From these observations, we infer the following points:

(1) In the sequence-based scenario, the error-space trade-off of \lmfd, \difd, and \dsfd outperforms that of the sampling algorithms \textsf{SWR} and \textsf{SWOR}. Notably, \difd exhibits a performance decline in skewed data streaming (PAMAP2), as depicted in Figures~\ref{fig:pamap-avg-err} and~\ref{fig:pamap-max-err}, aligning with observations in~\cite{wei2016matrix}.
    
(2) The trade-off between error and space for \dsfd is consistently superior to other competitors across both synthetic and real-world datasets in the sequence-based model. \dsfd achieves the same covariance error with less memory overhead than other methods. Furthermore, no empirical error was observed to exceed the theoretical bound, i.e., $\lVert \bm{A}_W^\top\bm{A}_W-\bm{B}_W^\top\bm{B}_W\rVert_2 > \varepsilon \lVert \bm{A}_W \rVert_F^2$, affirming our theoretical analysis and underscoring the efficiency and correctness of our algorithm.
    
(3) The trade-off advantage of \dsfd becomes more pronounced as the ratio $R$ (the ratio between the maximum and minimum squared norms in the dataset) decreases. For example, in Figures~\ref{fig:bibd-avg-err} and~\ref{fig:bibd-max-err}, when row vectors are fully normalized in the BIBD dataset, \dsfd consumes significantly less memory to achieve a certain level of covariance error compared to other algorithms. Additionally, the average and maximum relative errors nearly approach 0 when the maximum sketch size exceeds 200 rows.

\begin{figure}[h]
    \centering
    \begin{subfigure}[b]{0.45\linewidth}
        \centering
        \includegraphics[width=\textwidth]{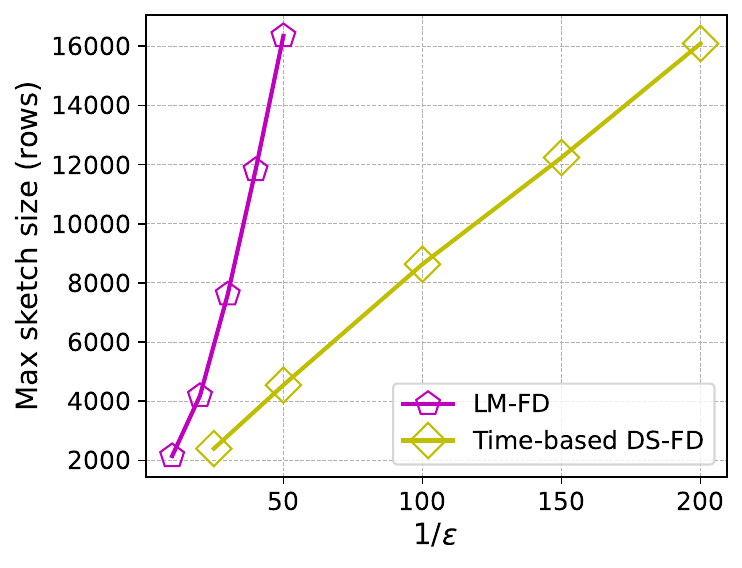}
        \caption{RAIL}
        \label{fig:rail-l-memory}
    \end{subfigure}
    \begin{subfigure}[b]{0.45\linewidth}
        \centering
        \includegraphics[width=\textwidth]{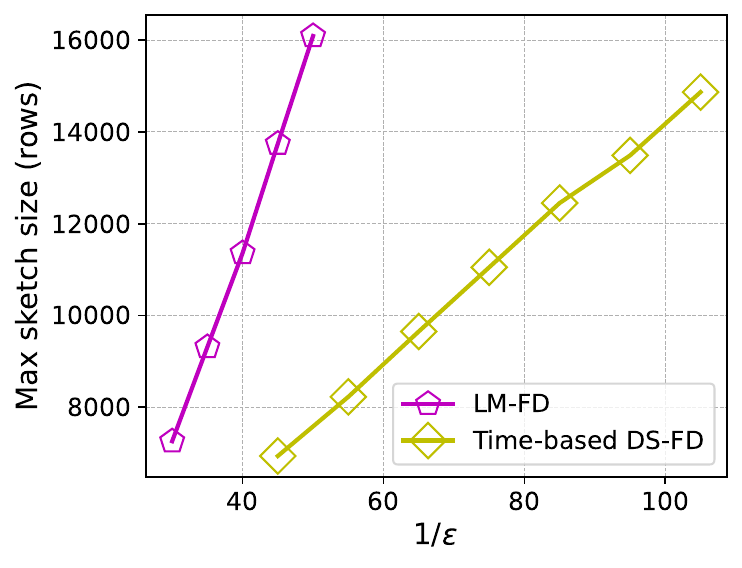}
        \caption{YEAR}
        \label{fig:year-l-memory}
    \end{subfigure}
    \caption{Setups of parameter $1/\varepsilon$ vs. maximum sketch size of \lmfd~ and \dsfd.}
    \label{fig:time-l-memory}
\end{figure}

\begin{figure}[h]
    \centering
    \begin{subfigure}[b]{0.48\linewidth}
        \centering
        \includegraphics[width=\textwidth]{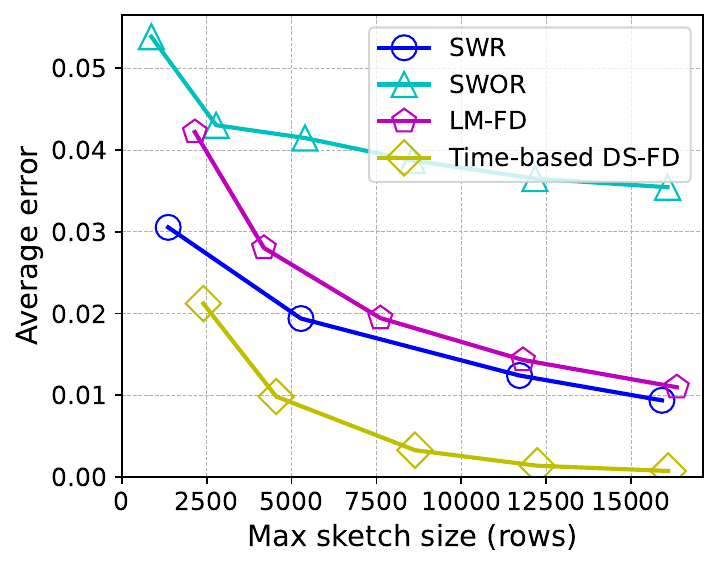}
        \caption{Average err vs. sketch size}
        \label{fig:rail-avg-err}
    \end{subfigure}
    \begin{subfigure}[b]{0.48\linewidth}
        \centering
        \includegraphics[width=\textwidth]{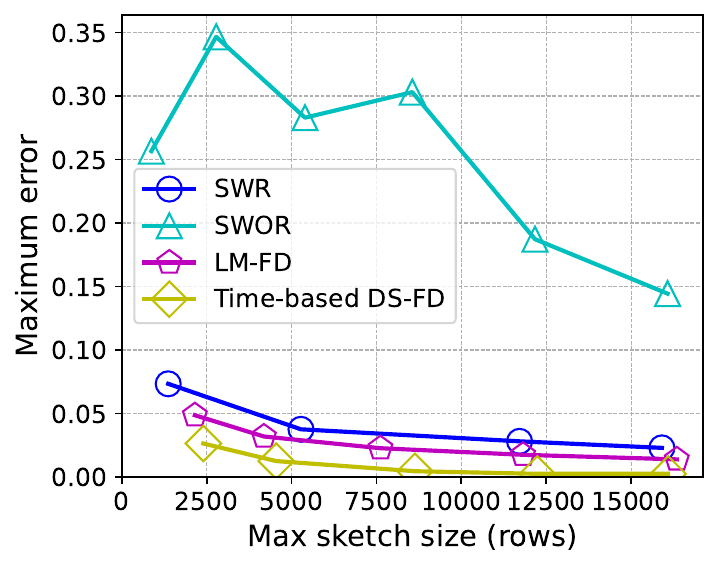}
        \caption{Maximum err vs. sketch size}
        \label{fig:rail-max-err}
    \end{subfigure}
    \caption{Error vs. sketch size on RAIL dataset.}
    \label{fig:time-avg-err}
\end{figure}

\begin{figure}[h]
    \centering
    \begin{subfigure}[b]{0.48\linewidth}
        \centering
        \includegraphics[width=\textwidth]{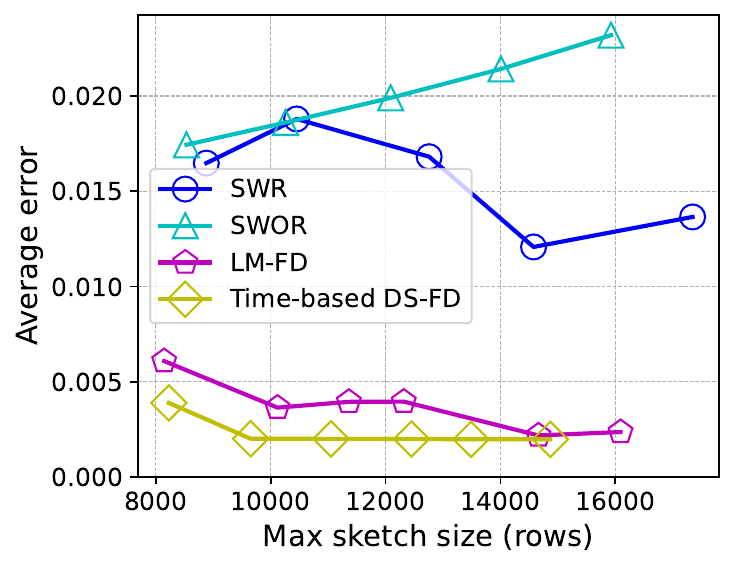}
        \caption{Average err vs. sketch size}
        \label{fig:year-avg-err}
    \end{subfigure}
    \begin{subfigure}[b]{0.48\linewidth}
        \centering
        \includegraphics[width=\textwidth]{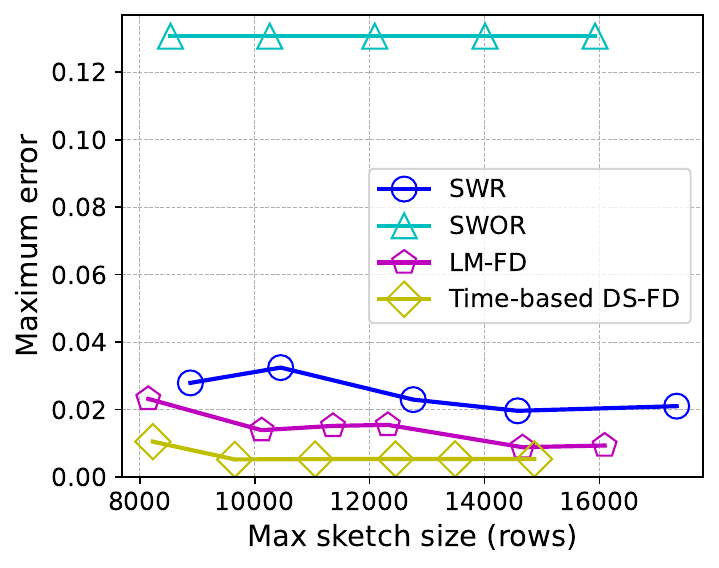}
        \caption{Maximum err vs. sketch size}
        \label{fig:year-max-err}
    \end{subfigure}
    \caption{Error vs. sketch size on YEAR dataset.}
    \label{fig:time-max-err}
\end{figure}

Subsequently, we evaluate four time-based algorithms on the two datasets tailored for time-based sliding windows. Figure~\ref{fig:time-l-memory} depicts the space overhead for \lmfd and Time-Based \dsfd with varying parameters $\ell$ on the RAIL and YEAR datasets. Additionally, Figures~\ref{fig:time-avg-err} and~\ref{fig:time-max-err} illustrate the trade-offs between the maximum sketch size and the average error, and between the maximum sketch size and the maximum error, respectively. The experiments conducted on the time-based window model yield the following observations:

(1) As shown in Figure~\ref{fig:time-l-memory}, the space cost of \lmfd escalates more rapidly than that of \dsfd as $1/\varepsilon$ increases. The actual space overheads for both \lmfd and Time-Based \dsfd align with their theoretical predictions of \(O\left(\frac{d}{\varepsilon^2} \log \varepsilon NR\right)\) and \(O\left(\frac{d}{\varepsilon} \log \varepsilon NR\right)\), respectively, corroborating the theoretical analyses.

(2) Figures~\ref{fig:time-avg-err} and~\ref{fig:time-max-err} show that Time-Based \dsfd exhibits a superior space-error trade-off compared to other algorithms on both the RAIL and YEAR datasets. This indicates that our algorithm effectively adapts to the time-based sliding window model, maintaining its performance and efficiency.

\begin{table}[ht]
\centering
\caption{Update time and query time of all methods with a relative error bound of $\varepsilon = 1/100$ on the BIBD dataset.}
\label{tab:update-time}
\sisetup{scientific-notation=true}
\begin{tabular}{|l|r|r|}
\hline
Time(\unit{\milli\second})& Update time & Query Time \\ \hline
SWR & 65.722 & 157.500 \\ \hline
SWOR & 3.143 & 291.936 \\ \hline
\lmfd & 0.061 & 3599.310 \\ \hline
\difd & 2.428 & 59.904 \\ \hline
\textbf{\dsfd} & 1.053 & 27.655 \\ \hline
\end{tabular}%

\end{table}

\header{\bf Update and query time. } We also record the average one-step update time and query time of all algorithms on BIBD dataset, as detailed in Table~\ref{tab:update-time}. Based on these observations, we draw the following conclusions:

(1) \lmfd requires the least amount of time for average updates, a finding that is in line with its update time complexity of $O\left(d \log \varepsilon NR\right)$ as reported by Wei et al.~\cite{wei2016matrix}. Conversely, the average query time for \lmfd is the highest among the evaluated methods. This increase in query time can be attributed to the time-consuming merging operation of all sketches in non-expiring blocks.

(2) Our \dsfd algorithm shows an acceptable average update time, while its average query time is the lowest among all the methods. This performance indicates that \dsfd effectively balances update and query times, making it an advantageous choice for matrix sketching over sliding windows.
\section{Conclusion}

In this paper, we delve into the challenge of matrix sketching over sliding windows, introducing a novel method, denoted as \dsfd. This method achieves space costs of \(O\left(\frac{d}{\varepsilon} \log R \right)\) and \(O\left(\frac{d}{\varepsilon} \log \varepsilon NR \right)\) for estimating covariance matrices in sequence-based and time-based sliding windows, respectively. Furthermore, we establish and validate a space lower bound for the covariance matrix estimation problem within sliding windows, demonstrating the space efficiency of our algorithm. Through extensive tests on large-scale synthetic and real-world datasets, we empirically validate the accuracy and efficiency of \dsfd, corroborating our theoretical analyses.

\begin{acks}
\begin{sloppypar}
This research was supported in part by National Science and Technology Major Project (2022ZD0114802), by National Natural Science Foundation of China (No. U2241212, No. 61932001, No. 62276066, No. 62376275), by Beijing Natural Science Foundation No. 4222028, by Beijing Outstanding Young Scientist Program (No.BJJWZYJH012019100020098), by Alibaba Group through Alibaba Innovative Research Program. We also wish to acknowledge the support provided by the fund for building world-class universities (disciplines) of Renmin University of China, by Engineering Research Center of Next-Generation Intelligent Search and Recommendation, Ministry of Education, Intelligent Social Governance Interdisciplinary Platform, Major Innovation \& Planning Interdisciplinary Platform for the “Double-First Class” Initiative, Public Policy and Decision-making Research Lab, and Public Computing Cloud, Renmin University of China.  
\end{sloppypar}
\end{acks}

\newpage
\bibliographystyle{ACM-Reference-Format}
\bibliography{sample}

\newpage
\appendix
\section{Appendix}

\subsection{Proof of Lemma~\ref{lem:fast-dsfd}}

\begin{proof}
    \begin{equation*}
        \begin{split}
            \bm{D}^\top \bm{D} &= \bm{V\Sigma}^2\bm{V}^\top\\
            \bm{D}^{\prime\top} \bm{D}^\prime&= (\bm{D}-\bm{Dv}_j\bm{v}_j^\top)^\top(\bm{D}-\bm{Dv}_j\bm{v}_j^\top)\\
            &=(\bm{D}^\top-\bm{v}_j\bm{v}_j^\top \bm{D}^\top)(\bm{D}-\bm{D}\bm{v}_j\bm{v}_j^\top)\\
            &=\bm{D}^\top \bm{D}-\bm{v}_j\bm{v}_j^\top \bm{D}^\top \bm{D}-\bm{D}^\top \bm{D} \bm{v}_j\bm{v}_j^\top+\bm{v}_j\bm{v}_j^\top \bm{D}^\top \bm{D} \bm{v}_j\bm{v}_j^\top\\
            &=(\bm{I}-\bm{v}_j \bm{v}_j^\top)\bm{D}^\top \bm{D}(\bm{I}-\bm{v}_j \bm{v}_j^\top)^\top\\
            &=(\bm{I}-\bm{v}_j \bm{v}_j^\top)V\Sigma^2V^\top(\bm{I}-\bm{v}_j \bm{v}_j^\top)^\top
        \end{split}
    \end{equation*}
\end{proof}

\subsection{Proof of Theorem~\ref{thm:norm-seq-sw-fd}}
\label{proof:norm-seq-sw-fd}

\begin{proof}
    We first prove that the \dsfd~algorithm can return a sketch that satisfies the error bound \eqref{eq:norm-seq-swfd-error}. Let $\bm{A}_{s,t}$ be the full stack of rows from time $s$ to $t$, and let $\bm{B}_{s,t}$ be the row stack of all snapshots that are dumped from time $s$ to $t$. Define $\bm{C}^\top_{s,t}\bm{C}_{s,t}$ as the covariance residual matrix, given by $\bm{C}^T_{s,t}\bm{C}_{s,t}=\bm{A}^\top_{s,t}\bm{A}_{s,t}-\bm{B}^\top_{s,t}\bm{B}_{s,t}$. Then, the estimation for $\bm{A}_{T-N, T}^\top\bm{A}_{T-N,T}$ is returned as $\bm{B}_{T-N,T}^\top\bm{B}_{T-N,T}+\hat{\bm{C}}_{(k-1)N+1,T}^\top\hat{\bm{C}}_{(k-1)N+1,T}$, where $\hat{\bm{C}}_{(k-1)N+1,T}$ is the main FD sketch at time $T$ and $k=\max(1,\lfloor(T-1)/N\rfloor)$. For clarity in the proof, $\bm{A}_{(k-1)N+1,t}$, $\bm{B}_{(k-1)N+1,t}$, $\bm{C}_{(k-1)N+1,t}$, $\hat{\bm{C}}_{(k-1)N+1,t}$ are denoted as $\bm{A}_t$, $\bm{B}_t$, $\bm{C}_t$, and $\hat{\bm{C}}_t$. The error is defined as:

    {\allowdisplaybreaks
    \begin{align*}
        &\hphantom{{}={}}\textbf{cova-err}(\bm{A}_W, \bm{B_W})\\
        &=\lVert \bm{A}_{T-N, T}^\top\bm{A}_{T-N,T}- (\bm{B}_{T-N,T}^\top\bm{B}_{T-N,T}+\hat{\bm{C}}_T^\top\hat{\bm{C}}_T)\rVert_2\\
        &=\lVert \bm{A}_T^\top\bm{A}_T -\bm{A}_{T-N}^\top\bm{A}_{T-N}-\bm{B}_T^\top\bm{B}_T+\bm{B}_{T-N}^\top\bm{B}_{T-N}-\hat{\bm{C}}_T^\top\hat{\bm{C}}_T\rVert_2\\
        &=\lVert \bm{C}_T^\top\bm{C}_T - \bm{C}_{T-N}^\top\bm{C}_{T-N}-\hat{\bm{C}}_T^\top\hat{\bm{C}}_T\rVert_2\\
        &=\lVert (\bm{C}_T^\top\bm{C}_T -\hat{\bm{C}}_T^\top\hat{\bm{C}}_T)- (\bm{C}_{T-N}^\top\bm{C}_{T-N} - \hat{\bm{C}}_{T-N}^\top\hat{\bm{C}}_{T-N})\\
        & \quad -\hat{\bm{C}}_{T-N}^\top\hat{\bm{C}}_{T-N}\rVert_2\\
        & \le \lVert \bm{C}_T^\top\bm{C}_T -\hat{\bm{C}}_T^\top\hat{\bm{C}}_T\rVert_2+ \lVert\bm{C}_{T-N}^\top\bm{C}_{T-N} - \hat{\bm{C}}_{T-N}^\top\hat{\bm{C}}_{T-N}\rVert_2\\
        &  \quad  +\lVert\hat{\bm{C}}_{T-N}^\top\hat{\bm{C}}_{T-N}\rVert_2\\
    \end{align*}
    }
    
    where $\hat{\bm{C}}_T$ and $\hat{\bm{C}}_{T-N}$ are FD sketches for $\bm{C}_T$ and $\bm{C}_{T-N}$, respectively.~\cite{liberty2022even}. In detail, we define $\bm{\Delta}_t=\bm{a}_t^\top\bm{a}_t + \hat{\bm{C}}_{t-1}^\top\hat{\bm{C}}_{t-1} - (\hat{\bm{C}}_{t}^\top\hat{\bm{C}}_{t} + \bm{b}_t^\top\bm{b}_t)=\bm{U}_t\cdot \min(\bm{\Lambda}_t, \bm{I}\cdot \lambda_\ell^t)\cdot \bm{U}_t^\top$. Then

    \begin{equation*}
        \begin{split}
        \sum_{t=(k-1)N+1}^T \bm{\Delta}_t &= \bm{A}_T^\top \bm{A}_T - \bm{B}_T^\top \bm{B}_T - \hat{\bm{C}}_{T}^\top\hat{\bm{C}}_{T}\\
        &= \bm{C}_T^\top \bm{C}_T - \hat{\bm{C}}_{T}^\top\hat{\bm{C}}_{T} \\
        \end{split}
    \end{equation*}

    Similar to the proof of the Lemma 1 in~\cite{liberty2022even}, we have

    \begin{equation*}
        \lVert \bm{C}_T^\top \bm{C}_T - \hat{\bm{C}}_{T}^\top\hat{\bm{C}}_{T} \rVert_2 \le \varepsilon \lVert \bm{C}_T \rVert_F^2
    \end{equation*}
    
    By the covariance error guarantees provided by FD, we have
    \begin{align*}
        \lVert \bm{C}_T^\top\bm{C}_T -\hat{\bm{C}}_T^\top\hat{\bm{C}}_T\rVert_2 &\le \varepsilon \lVert \bm{C}_T \rVert_F^2 \le \varepsilon \lVert \bm{A}_T \rVert_F^2\\
        \lVert\bm{C}_{T-N}^\top\bm{C}_{T-N} - \hat{\bm{C}}_{T-N}^\top\hat{\bm{C}}_{T-N}\rVert_2 &\le \varepsilon \lVert \bm{C}_{T-N} \rVert_F^2 \le \varepsilon \lVert \bm{A}_{T-N} \rVert_F^2
    \end{align*}

    Then the covariance error is bounded by
    
    \begin{equation*}
        \begin{split}
        &\textbf{cova-err}(\bm{A}_W, \bm{B}_W)\\
        \le & \varepsilon \lVert \bm{A}_T \rVert_F^2 + \varepsilon \lVert \bm{A}_{T-N} \rVert_F^2 + \varepsilon N \\
        \le & \varepsilon(T-(k-1)N) + \varepsilon (T-N-(k-1)N) + \varepsilon N \\
        =&2\varepsilon( T - (k-1)N)\le 4\varepsilon N\\
        \end{split}
    \end{equation*}
    where the last inequality holds due to the \textit{restart every $N$ step} operation.

    To prove that the algorithm takes $O\left(d/\varepsilon\right)$ memory, we need to find the upper bound on the number of snapshots in the queue $\mathcal{S}$ at any given time. Let's consider a time $T$ where the queue of snapshots $\mathcal{S}$ has $k$ snapshots of vectors $\bm{v}_1, \bm{v}_2, \dots, \bm{v}_k$. We'll analyze the case where $N > 1/\varepsilon$. Since $\lVert \bm{v}_i \rVert_2^2 \ge \varepsilon N$ for each $i$, we have $\sum_{i=1}^k \lVert \bm{v}_i \rVert_2^2 \ge k\varepsilon N$. Moreover, as the sum of energy of the snapshot vectors cannot exceed the sum of energy from the residual matrix and the input vectors, we have $\sum_{i=1}^k \lVert \bm{v}_i \rVert_2^2 \le \ell \varepsilon N+N$. Thus $2N \ge \sum_{i=1}^k \lVert \bm{v}_i \rVert_2^2 \ge k \epsilon N$, which implies $k\le 2/\varepsilon$. Finally, considering the dimension of each snapshot vector as $d$, when we sum up the space required by $\hat{\bm{C}}\in \mathbb{R}^{\ell\times d}$ and $\mathcal{S}$, we get the total memory cost of $O\left(\ell d+k d\right) = O(d/\varepsilon)$.
\end{proof}

\subsection{Proof of Theorem~\ref{thm:seq-sw-fd}}
\label{proof:thm:seq-sw-fd}

\begin{proof}
We consider the error guarantee first. For the $i$-th level of \seqdsfd, the error is $2^{i+2}\varepsilon N$ according to Theorem~\ref{thm:norm-seq-sw-fd}. To ensure that the error of \eqref{eq:error-of-seq-swfd} is bounded by $\beta \varepsilon \lVert \bm{A}_{T-N,T} \rVert_F^2$, we need to find the level $i$ such that $2^{i+2}\varepsilon N \leq \beta \varepsilon \lVert \bm{A}_{T-N,T} \rVert_F^2$. We choose the integer $i$ of 
\begin{equation}
    \label{eq:t1}
    \log_2 \frac{\beta \lVert \bm{A}_{T-N,T}\rVert_F^2}{4N}-1 < i\le \log_2 \frac{\beta \lVert \bm{A}_{T-N,T}\rVert_F^2}{4N}.
\end{equation}

Meanwhile, we must ensure that there are "enough" snapshots in level $i$. Suppose at time $T$, the queue of snapshots $\mathcal{L}[i].\mathcal{S}$ contains $k$ snapshots denoted as $\bm{v}_1, \bm{v}_2, \dots, \bm{v}_k$. Given that the dump threshold of level $i$ is $\theta = 2^i \varepsilon N$, we have $\sum_{i=1}^k \lVert \bm{v}_k \rVert_2^2 \geq k \cdot 2^i \varepsilon N$. This sum includes contributions from both the residual matrix and the Frobenius norm of $\lVert \bm{A}_W\rVert_F^2$. We can express this as an upper bound $\sum_{i=1}^k \lVert \bm{v}_k \rVert_2^2 \le \ell 2^i \varepsilon N + \lVert \bm{A}_{T-N,T}\rVert_F^2$. Thus, we obtain

\begin{equation}
    \label{eq:t2}
    2^iN+\lVert \bm{A}_{T-N,T}\rVert_F^2\ge k\cdot2^i\varepsilon N.
\end{equation}

By combining \eqref{eq:t1} and \eqref{eq:t2}, we deduce

\begin{equation}
    \left(\frac{\beta}{4}+1 \right)\lVert \bm{A}_{T-N,T}\rVert_F^2 > k\varepsilon\frac{\beta \lVert \bm{A}_{T-N,T}\rVert_F^2}{8}.
\end{equation}
which derives $k< 2\left(1+\frac{4}{\beta}\right)\frac{1}{\varepsilon}$. For instance, setting $\beta=4$, it suffices to store at most $\frac{4}{\varepsilon}$ snapshots across all levels. Therefore, the total memory complexity of our algorithm is $O\left(\frac{d}{\varepsilon}\log R\right)$.
\end{proof}

\subsection{Proof of Theorem~\ref{thm:time-swfd-lower-bound}}
\label{proof:time-swfd-lower-bound}

\begin{figure*}[h]
    \includegraphics[width=\textwidth]{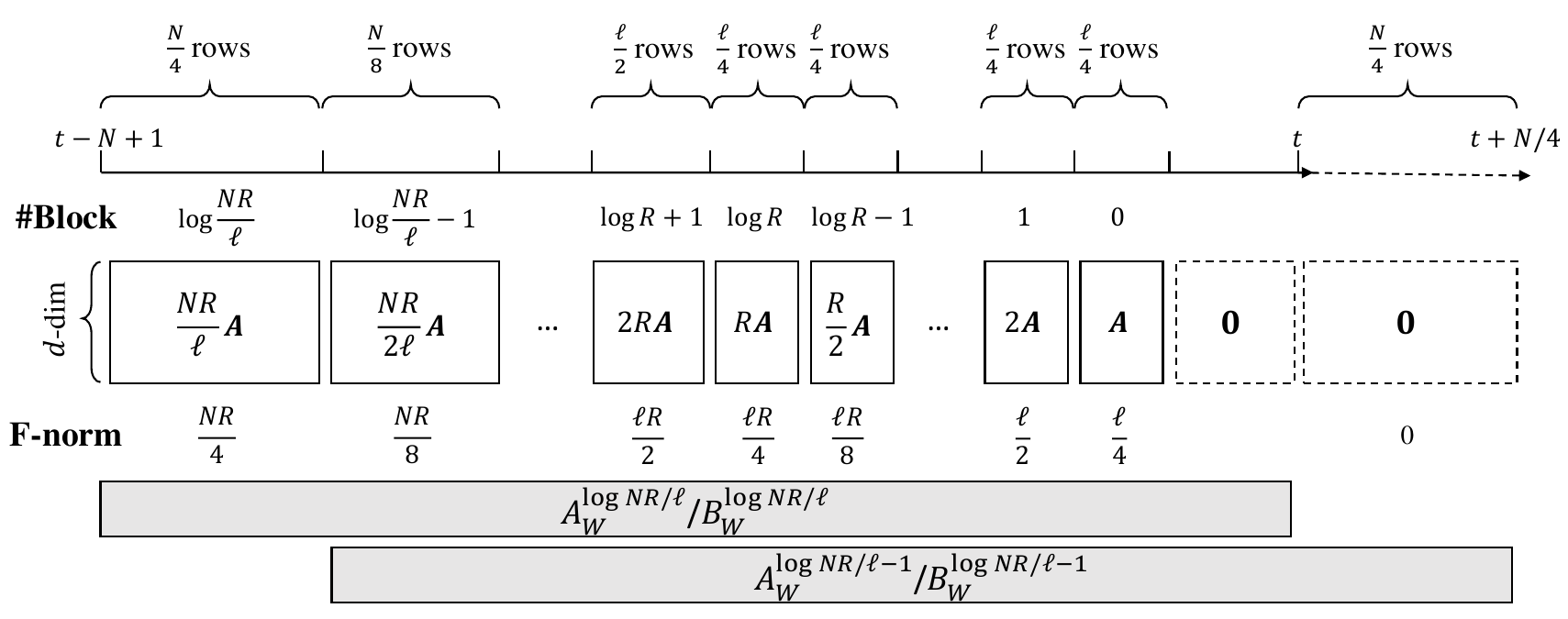}
    \caption{Proof of time-based lower bound.}
    \label{fig:time-based-lower-bound}
\end{figure*}

\begin{proof}

Similar as the proof of Theorem ~\ref{thm:seq-swfd-lower-bound}, we partition a window of size $N$ of $d$-dimension vectors into $\log NR/\ell +2$ blocks, as shown in Figure ~\ref{fig:time-based-lower-bound}, number the leftmost $\log NR/\ell +1$ blocks $\log NR/\ell, \dots, 1,0$ from left to right. We choose $\log NR/\varepsilon + 1$ matrices of $\frac{\ell}{4}\times d$ in the set of matrices $\mathcal{A}$, the total number of dinstinct arrangements is $L= {\Omega(2^{d\ell})\choose{\log NR/\ell}}$. Hence $\log L =\Omega(d\ell \log NR/\ell)$. For the block $i$, we multiply the chosen $\bm{A}_i\in \mathbb{R}^{\frac{\ell}{4}\times d}$ with a scalar of $\sqrt{2^i}$. Therefore, the square of Frobenius norm of block $i$ is $\lVert\bm{A}_i\rVert_F^2=2^i \cdot \ell/4$. For the block $i$ where $i> \log R$, we have to expand the number of rows from $\ell/4$ to $\frac{\ell}{4} \cdot 2^{i-\log R}$ to ensure that $1\le \lVert \bm{a} \rVert_2^2 \le R$. The total number of rows have to be bounded by $N$, that is, $\frac{N}{2}+\frac{\ell}{4}(\log R - 1)\le N$, which derives $N\ge \frac{\ell}{2}\log \frac{R}{2}$. 

We will assume that the algorithm is presented with one of these $L$ arrangements of length $N$, followed by a sequence of all zero vectors of length $N$. We denote $\bm{A}_W^i$ as the matrix over the sliding window of length $N$ the moment that $i+1, i+2,\dots \log R$ blocks have expired, e.g. $\bm{A}_W^{\log R}$ and $\bm{A}_W^{\log R-1}$ in Figure ~\ref{fig:time-based-lower-bound}. Suppose our sliding window algorithm gives estimations $\bm{B}_W^i$ and $\bm{B}_W^{i-1}$ of $\bm{A}_W^i$ and $\bm{A}_W^{i-1}$ with a relative error of $\frac{1}{3\ell}$, that is to say

\begin{gather*}
    \lVert {\bm{A}_W^i}^\top\bm{A}_W^i -{\bm{B}_W^i}^\top {\bm{B}_W^i}\rVert_2\le \frac{1}{3\ell} \lVert \bm{A}_W^i\rVert_F^2 = \frac{1}{3\ell}\left( \frac{\ell}{4} \cdot 2^{i+1} - \frac{\ell}{4}\right)\\
    \lVert {\bm{A}_W^{i-1}}^\top\bm{A}_W^{i-1} -{\bm{B}_W^{i-1}}^\top {\bm{B}_W^{i-1}}\rVert_2\le \frac{1}{3\ell} \lVert \bm{A}_W^{i-1}\rVert_F^2 = \frac{1}{3\ell}\left( \frac{\ell}{4} \cdot 2^{i} - \frac{\ell}{4}\right)
\end{gather*}

Then we can answer the block $i$ with $\bm{B}_i^\top \bm{B}_i = {\bm{B}_W^i}^\top {\bm{B}_W^i} - {\bm{B}_W^{i-1}}^\top {\bm{B}_W^{i-1}}$.
\begin{equation*}
    \label{eq:leftest-block-app}
    \begin{split}
        &\lVert \bm{A}_i^\top \bm{A}_i -\bm{B}_i^\top \bm{B}_i\rVert_2\\
        =&\lVert ({\bm{A}_W^i}^\top{\bm{A}_W^i} -{\bm{A}_W^{i-1}}^\top{\bm{A}_W^{i-1}}) -({\bm{B}_W^i}^\top {\bm{B}_W^i}  -{\bm{B}_W^{i-1}}^\top {\bm{B}_W^{i-1}})\rVert_2\\
        \le & \lVert {\bm{A}_W^i}^\top{\bm{A}_W^i} -{\bm{B}_W^i}^\top {\bm{B}_W^i}\rVert_2 +\lVert {\bm{A}_W^{i-1}}^\top{\bm{A}_W^{i-1}}  -{\bm{B}_W^{i-1}}^\top {\bm{B}_W^{i-1}}\rVert_2\\
        \le & \frac{1}{3\ell}\left(\frac{3}{4}\ell \cdot 2^i - \frac{\ell}{2} \right) \le \frac{1}{\ell} \lVert \bm{A}_i\rVert_F^2
    \end{split}
\end{equation*}

So the algorithm can estimate the blocks of levels of all the $\log NR/\ell$ levels of the blocks by Lemma. ~\ref{lem:fd-lower-bound}, requires $\Omega(d\ell)$ bits of space. So one of the lower bound of any deterministic algorithm is $\Omega\left(\frac{d}{\varepsilon} \log \varepsilon NR\right)$.
\end{proof}

\end{document}